\documentclass[12pt]{article}

\usepackage{amsmath, amssymb, amsthm}
\usepackage{graphicx}
\usepackage{booktabs}
\usepackage{multirow}
\usepackage{xcolor}
\usepackage{algorithm}
\usepackage{algorithmic}
\usepackage[authoryear,longnamesfirst]{natbib}
\usepackage{url}

\newtheorem{assumption}{Assumption}

\newtheorem{proposition}{Proposition}

\newcommand{\shorttitle}{Functional clustering via random projections}
\newcommand{\shortauthors}{S. Chakrabarty et al.}
\title{Two-stage Ensemble Clustering of Functional Data Using Random Projections}

\author{
  Sourav Chakrabarty\thanks{Corresponding author. Email: chakrabartysourav024@gmail.com}\\
  \normalsize Applied Statistics Unit, Indian Statistical Institute, Kolkata, India
  \and
  Anirvan Chakraborty\\
  \normalsize Department of Mathematics and Statistics,\\ Indian Institute of Science Education and Research Kolkata, India
  \and
  Shyamal K. De\\
  \normalsize Applied Statistics Unit, Indian Statistical Institute, Kolkata, India
}

\date{}

\begin{document}
\pagestyle{empty}

\maketitle

\begin{abstract}
We propose a computationally simple framework for clustering functional data based on Gaussian-process-generated random projections. In this approach, each curve is first projected onto a large collection of independent Gaussian process realizations. The resulting high-dimensional representations are clustered using the Mean Absolute Difference of Distances (MADD), a dissimilarity measure well suited for high-dimensional settings. A population-level analysis of this dissimilarity provides insight into how random projections help capture distributional differences between functional populations. We introduce a second stage of clustering to additionally leverage on data-driven projection directions. Thus, in Stage I, an initial clustering is obtained using a set of prespecified projection families. In Stage II, this partition is refined by constructing Gaussian random projections based on an estimated covariance operator that uses the first stage of cluster labels. Finally, a normalized cost function is used to select the optimal clustering among candidate solutions. The proposed clustering algorithm is broadly applicable to diverse functional data regimes including irregular and partially observed data. Through extensive simulations and real-data applications, we show that the proposed method achieves a high degree of accuracy and outperforms many of the state-of-the-art methods across a wide range of functional data settings.
\end{abstract}
\noindent\textbf{Keywords:} Ensemble method; Gaussian process; Mean Absolute Difference of Distances; Principal component directions; Rand index.
\section{Introduction}
Functional data arise across a broad range of scientific disciplines, e.g., chemometrics (spectrometric curves), meteorology (temperature and precipitation profiles) and biomedicine (growth trajectories and fMRI signals). In practice, such data are observed on dense, sparse or irregular grids and are often partially observed over the domain. Clustering functional data presents intrinsic statistical challenges due to the infinite-dimensional nature of the data, which are elements of a separable Hilbert space such as $L^2[a,b]$. Consequently, the methodology required to cluster such data needs to be fundamentally different from the classical multivariate clustering techniques.

A common approach to clustering functional data is to define a distance between curves, such as the $L_2$ distance and its variants, and then apply algorithms like k-means or k-medoids (see \cite{chen2014optimally}, \cite{martino2019k}, \cite{chen2021clustering}). Another widely used strategy is to project the functional observations onto a finite-dimensional subspace and perform multivariate clustering. Such projections can be obtained either by expanding fully observed curves in an orthonormal basis (see \cite{chiou2007functional}, \cite{LUZLOPEZGARCIA2015231}, \cite{delaigle2019clustering}), or by representing discretely observed data through truncated basis expansions such as spline, principal component, or wavelet bases (see \cite{abraham2003unsupervised}, \cite{kayano2010functional} and \cite{10.1111/j.1541-0420.2012.01828.x}). In addition, several model-based methods have been proposed for functional clustering (see \cite{margaritella2021parameter}, \cite{tfun-paper},\cite{sasfun-paper}, \cite{akeweje2024learning}, \cite{leroy2022magma}), while some works use pseudo-density–based approaches (see \cite{JACQUES2013164} and \cite{FADP-paper}). More recently, deep learning methods, including autoencoders, have also been explored (see \cite{FNN-paper}, \cite{FAE-paper}, \cite{FAEclust-paper}).

A key advantage of using random projections is that the projected values can be computed regardless of the nature of the observation grid of each functional datum, whether the curves are observed on dense, sparse or irregular grids, or are partially observed. This flexibility arises because the observation grid of the generated random direction can be chosen to match that of the functional datum when computing the inner products associated with the projections. Consequently, our procedure naturally accommodates a wide range of observation grids without requiring prior smoothing, thereby avoiding potential smoothing bias that may adversely affect clustering performance. Another appealing feature of our approach is its flexibility with respect to the choice of Gaussian distributions used to generate the random projections. We apply the clustering algorithm under different choices of such distributions, evaluate their performance using a suitable clustering index, and then combine the results through an ensemble strategy to obtain the final clusters.

The main contributions and distinguishing features of the proposed clustering methodology are as follows. First, in contrast to existing projection-based approaches for functional clustering, which typically rely on a limited number of projection directions and consequently yield low-dimensional summaries of inherently infinite-dimensional objects, our method employs a large collection of random projections drawn from suitable Gaussian measures. This strategy enables the aggregation of information across a rich set of directions in the underlying function space, thereby inducing a high-dimensional representation that captures subtle distributional differences among functional observations. To the best of our knowledge, this constitutes the first systematic use of such high-dimensional representations derived from random projections in the context of functional clustering. Second, we include a data-adaptive projection scheme in the proposed methodology. Specifically, leveraging the initial clustering obtained from the first stage, we estimate cluster-specific principal component directions and construct a clustering index analogous to that used in the first stage. The final clustering is then selected based on the stage that yields the lower value of this index. This two-stage procedure effectively combines the strengths of an ensemble of random projections and those of the adaptive data-driven projections, which improves the final clustering performance. Third, owing to the flexibility of the random projection framework, the proposed method is broadly applicable across diverse functional data regimes, including densely observed, sparsely observed, irregularly sampled, and partially observed trajectories. Importantly, the procedure circumvents the need for any preliminary smoothing, thereby avoiding potential biases associated with smoothing-based pre-processing.

\section{High-dimensional random projections and MADD}
\label{MADD-section}
Let $\{X_1, \dots, X_n\}$ denote a set of $n$ functional observations, where each $X_i \in L^2(\mathcal{T})$, which is the space of real square-integrable functions defined on a compact interval $\mathcal{T} \subset \mathbb{R}$. The space $L^2(\mathcal{T})$ is equipped with the standard inner product $\langle f, g \rangle = \int_{\mathcal{T}} f(t) g(t) \, dt$ and the associated norm $\|\cdot\|$, which makes it a separable Hilbert space.

Consider a zero mean Gaussian process with covariance operator $C$, denoted by $\mathcal{GP}(0,C)$. Let $\{Z_{1}, \dots, Z_{M}\}$ denote $M$ independent random functions drawn from this process. For each functional observation $X_i$, $1 \leq i \leq n$,  we define the projected vector
\begin{align}
\label{proj_def}
\mathbf{X}_i^* = (\langle X_i, Z_{1} \rangle, \langle X_i, Z_{2} \rangle, \dots, \langle X_i, Z_{M} \rangle).
\end{align}
Conditional on $X_i$, the projected variables $\{\langle X_i, Z_{q} \rangle\}_{q=1}^M$ are independent and identically distributed (i.i.d.). Consequently, $\mathbf{X}_i^*$
has exchangeable components, yielding a finite-dimensional representation in which cluster-related differences accumulate as the number of projections $M$ increases.

The use of Gaussian-process-based random projections is motivated by the theory developed in \cite{cuesta2007sharp}. In particular, Theorem 4.1 of that paper establishes that for random functions $X$ and $Y$ on $L^2(\mathcal{T})$, equality of their distributions can be characterized through their projections along random directions (that is, random functions) drawn from a non-degenerate Gaussian process. More precisely, let $A = \{z \in L^2(\mathcal{T}) : \langle X,z\rangle \stackrel{d}{=} \langle Y,z\rangle\}$. Then, under mild moment conditions, $X \stackrel{d}{\neq} Y$ if and only if $\mathbb{P}(Z \in A^c) = 1$, where $Z$ is a Gaussian random function (or direction). Consequently, for any Gaussian random direction, the distributions of the projected $X$-samples and projected $Y$-samples will be different whenever the distributions of $X$ and $Y$ are different. From a practical perspective, we consider a large number, say $M$, of random projections in order to capture the distributional differences between $X$ and $Y$ in a more effective manner.

Once we project the functional data into a $M$-dimensional space, where $M$ is potentially large, clustering these new projected observations using a $k$-means type algorithm would require an appropriate notion of distance between any pair of the projected observations. However, the classical Euclidean distance would not be useful in this setup due to the well-known ``curse of dimensionality''. For clustering high-dimensional data, a notion of distance (MADD) was considered in \cite{sarkar2019perfect} which was shown to be effective for such data settings. Given a set of $n$ observations $\mathcal{D}=\{\mathbf{u}_1,\dots,\mathbf{u}_n\}$ with $\mathbf{u}_i \in \mathbb{R}^M$ for all $i=1,2,\dots,n$,
the Mean Absolute Difference of Distances (MADD) 
between two vectors $\mathbf{v}, \mathbf{w} \in \mathcal{D}$ 
is defined as
\begin{align}
\label{MADD_def}
&\rho(\mathbf{v}, \mathbf{w})
=
\frac
{\sum_{\mathbf{u} \in \mathcal{D} \setminus \{\mathbf{v}, \mathbf{w}\}}
\left|
d(\mathbf{v}, \mathbf{u})
-
d(\mathbf{w}, \mathbf{u})
\right|}{n-2},
\nonumber \\ &\mbox{where} \ \
d(\mathbf{v}, \mathbf{w})
=
\frac{1}{M}
\sum_{q=1}^{M}
\psi\!\left( \lvert v^{(q)} - w^{(q)} \rvert \right),
\end{align}
with $\mathbf{v}=(v^{(1)},\dots,v^{(M)})^\top$, $\mathbf{w}=(w^{(1)},\dots,w^{(M)})^\top$ and $\psi(t)=1-\exp(-t)$. Note that $\rho(\mathbf{v}, \mathbf{w})$ is defined relative to the fixed dataset $\mathcal{D}$, since it compares the distance profiles of $\mathbf{v}$ and 
$\mathbf{w}$ with respect to the remaining observations. It was proved in \cite{sarkar2019perfect} that the $k$-means clustering based on MADD achieves asymptotically perfect clustering performance when $M \rightarrow \infty$ and $n$ is fixed, which is the well-known high-dimensional low-sample size (HDLSS) setup. This motivates us to consider MADD as the choice of distance measure for the high-dimensional projected data. Another theoretical justification of selecting MADD in our clustering algorithms will be provided in the following subsection.

\subsection{A population-level characterization of MADD}
\label{MADD_theory}
Suppose that we have a collection of $n$ independent random vectors $\mathcal{U}=\{\mathbf{U}_1,\mathbf{U}_2,\dots,\mathbf{U}_n\}$ where each $\mathbf{U}_i$
takes values in $\mathbb{R}^M$ and follows one of the $K$ distributions $G_1,\ldots,G_K$. Define $d^*_{ab}=\mathbb{E}[d(\mathbf{V}, \mathbf{W})]$ where 
$\mathbf{V}$ and $\mathbf{W}$ are independent random vectors with ${\bf V} \sim G_a$ and ${\bf W} \sim G_b$ $(1 \leq a,b \leq K)$. Then the population version of MADD dissimilarity between the distributions $G_a$ and $G_b$ considered in \cite{sarkar2019perfect} is 
\begin{align}
\label{rho*-def}
\rho^*(G_a,\, G_b) = \frac{1}{n-2}\Big\{&(n_a-1)|d^*_{ab}-d^*_{aa}|+(n_b-1)|d^*_{ab}-d^*_{bb}| \nonumber \\&+\sum_{c=1;\ \substack{c \neq a,b}}^{K}n_{c}|d^*_{ac}-d^*_{bc}|\Big\},
\end{align}
\noindent where $n_r$ denotes the number of observations coming from $r$-th population $(r=1,2,\dots,K)$. It can be shown that
\[
    \rho^*(G_a,\, G_b)=\frac
{\sum_{\mathbf{U} \in \mathcal{U} \setminus \{\mathbf{V}, \mathbf{W}\}}
\left|
\mathbb{E}[d(\mathbf{V}, \mathbf{U})]
-
\mathbb{E}[d(\mathbf{W}, \mathbf{U})]
\right|}{n-2},
\]
\noindent
where $\mathbf{V} \stackrel{}{\sim} G_a$ and $\mathbf{W} \stackrel{}{\sim} G_b$.
Note that $\rho^*$ is obtained by replacing $d(\cdot,\cdot)$ appearing in the definition of $\rho$ by its expectation. In our framework, $\mathbf{V}=\mathbf{X}_r^*$ and $\mathbf{W}=\mathbf{X}_s^*$ for some $r,s \in \{1,\dots,n\}$ (see \eqref{proj_def}). Since the coordinates of ${\bf X}_r^*$ (also those of ${\bf X}_s^*$) are identically distributed, it follows that
\[
d^*_{ab}
=
\mathbb{E}\!\left[
\psi\!\left(
\big|
\langle X_r,Z_{1}\rangle
-
\langle X_s,Z_{1}\rangle
\big|
\right)
\right],
\]
where $\mathbf{X}_r^* \sim G_a$ and $\mathbf{X}_s^* \sim G_b$ independently.
Thus, $d({\bf X}_r^*,{\bf X}_s^*)$ as defined in \eqref{MADD_def} serves as an unbiased estimator of the corresponding expectation $d^*_{ab}$.

We next provide a theoretical justification for using the MADD dissimilarity on the randomly projected data in developing a $k$-means type clustering algorithm. We start with the following Carleman-type moment assumption on the absolute moments of a probability distribution $F$ on the separable Hilbert space $\mathcal H = L^2(\mathcal{T})$ (see \cite{cuesta2007sharp}).

\begin{assumption}
\label{assume-1}
The $r$-th order absolute moment $m_r = \int_{\mathcal H} \|x\|^r \, dF(x)$, is finite for every $r \ge 1$ and $\{m_r\}_{r \ge 1}$ satisfies the divergence condition $\sum_{r \ge 1} m_r^{-1/r} = \infty$.
\end{assumption}

\begin{proposition}
\label{MADD connection with MMD}
Consider a collection $X_1,\ldots,X_n$ of $\mathcal{H}$-valued random functions, where each random function follows one of the $K$ probability distributions $F_1,\ldots,F_K$ with $n_r (\geq 2)$ denoting the number of random functions following distribution $F_r$, $1 \leq r \leq K$. Suppose that Assumption \ref{assume-1} holds for each of $F_1,\ldots,F_K$. Let $Z_{1},\ldots,Z_{M}$ be i.i.d. random functions drawn from $\mathcal{GP}(0,C)$, independently of $X_1,\ldots,X_n$. Then
\[
\rho^*\!\left(
F^{proj}_{a},
F^{proj}_{b}
\right)=0
\quad\text{if and only if}\quad
F_a = F_b,
\]
where $F^{proj}_{r}$ is the distribution of $\mathbf{X}^* = (\langle X,Z_1\rangle,\dots,\langle X,Z_M\rangle)$, with $X\sim F_r$, $1 \leq r \leq K$.
\end{proposition}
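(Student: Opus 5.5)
The plan is to treat the two directions separately. The ``only if'' direction reduces to a kernel two-sample (MMD-type) identity in one dimension, applied conditionally on the random direction, followed by Theorem 4.1 of \cite{cuesta2007sharp}.

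\emph{Sufficiency.} Suppose $F_a=F_b$. The projected vectors use the same directions $Z_1,\dots,Z_M$, which are independent of the $X_i$'s, so $F^{proj}_a=F^{proj}_b$. It follows that $d^*_{ab}=d^*_{aa}=d^*_{bb}$ and $d^*_{ac}=d^*_{bc}$ for every $c$. Every term in \eqref{rho*-def} then vanishes, so $\rho^*=0$.

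\emph{Necessity, reduction step.} Suppose $\rho^*(F^{proj}_a,F^{proj}_b)=0$. Every summand in \eqref{rho*-def} is nonnegative, and $n_a-1\ge 1$, $n_b-1\ge1$. Hence $d^*_{ab}=d^*_{aa}$ and $d^*_{ab}=d^*_{bb}$, and in particular
\[
\Delta:=2d^*_{ab}-d^*_{aa}-d^*_{bb}=0 .
\]
By the coordinate-exchangeability remark before the proposition, each $d^*$ is a single-coordinate expectation. I would write $\psi(|t|)=1-k(t)$ with $k(t)=e^{-|t|}$, and condition on $Z_1=z$. Given $Z_1=z$, the variables $\langle X,z\rangle$ and $\langle X',z\rangle$ (and the analogous cross pairs) are independent. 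This holds because the $X_i$'s are independent of each other and of $Z_1$, so the conditioning does not disturb their joint structure. Writing $P_z$ and $Q_z$ for the laws of $\langle X,z\rangle$ with $X\sim F_a$ and $X\sim F_b$, we get
\[
\Delta=\mathbb{E}_{Z_1}\big[D(P_{Z_1},Q_{Z_1})\big],
\]
\[
D(P,Q)=\mathbb{E}k(V,V')+\mathbb{E}k(W,W')-2\mathbb{E}k(V,W),
\]
where $V,V'\sim P$ and $W,W'\sim Q$ are all independent.

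\emph{Key step: $D\ge 0$ with equality iff $P=Q$.} The function $e^{-|t|}$ is the characteristic function of the standard Cauchy law, with density $\{\pi(1+s^2)\}^{-1}$. Bochner/Fubini then gives
\[
D(P,Q)=\int_{\mathbb R}\frac{|\varphi_P(s)-\varphi_Q(s)|^2}{\pi(1+s^2)}\,ds .
\]
The Cauchy density is strictly positive, so $D\ge 0$, and $D=0$ iff $\varphi_P=\varphi_Q$ Lebesgue-a.e. Since characteristic functions are continuous, this means $\varphi_P=\varphi_Q$ everywhere, i.e.\ $P=Q$. The boundedness of $k$ justifies all the interchanges, so no moment conditions are needed at this stage.

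\emph{Conclusion.} Since $\Delta=0$ and the integrand $D(P_{Z_1},Q_{Z_1})$ is nonnegative, we get $P_{Z_1}=Q_{Z_1}$ almost surely. In other words, $\mathbb{P}(Z_1\in A)=1$, where $A=\{z:\langle X,z\rangle\stackrel{d}{=}\langle Y,z\rangle\}$ with $X\sim F_a$, $Y\sim F_b$. Assumption \ref{assume-1} holds for both $F_a$ and $F_b$, so Theorem 4.1 of \cite{cuesta2007sharp} applies. If $F_a\neq F_b$, it would give $\mathbb{P}(Z_1\in A^c)=1$, which is a contradiction; hence $F_a=F_b$.

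\emph{Main obstacle.} I expect the delicate point to be the hypotheses of the Cuesta-Albertos theorem. That result requires the Gaussian measure to be non-degenerate, so I would make this assumption on $C$ explicit (e.g.\ $C$ injective), since the proposition only says $\mathcal{GP}(0,C)$. The conditional-independence bookkeeping in the reduction step also needs care, because the directions are shared across observations. Measurability of $z\mapsto D(P_z,Q_z)$ follows from Fubini.
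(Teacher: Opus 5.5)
Your proposal is correct and follows essentially the same route as the paper. You identify $2d^*_{ab}-d^*_{aa}-d^*_{bb}$ (the paper's $A+B$) with the expectation, over a Gaussian direction, of the squared Laplace-kernel MMD between the projected laws, and then combine the characteristic property of the Laplace kernel with Theorem 4.1 of \cite{cuesta2007sharp}. The differences are minor. You verify characteristicness directly through the Cauchy spectral representation instead of citing \cite{JMLR:v12:sriperumbudur11a}, and you argue directly from $\rho^*=0$ rather than contrapositively. You also rightly make explicit the non-degeneracy of $C$, which the paper's appeal to \cite{cuesta2007sharp} tacitly requires.
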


\begin{proof}
Let $X_{a1},\ldots,X_{an_a}$ denote the sample functions following the $a$-th distribution $F_a$ for each $a = 1,\ldots,K$. Since each random function in the sample $\{X_1,\ldots,X_n\}$ follows one of the $K$ probability distributions $F_1,\ldots,F_K$, we have $\sum_{a=1}^K n_a=n$. It follows from equation \eqref{rho*-def} that
\begin{align} 
\label{population rho}
(n-2)\rho^*\!\left(
F^{proj}_{a}, F^{proj}_{b}
\right) 
&= (n_a-1)\big|\mathbb{E}[d(\mathbf{X}_{a1}^*, 
\mathbf{X}_{b1}^*)]
-\mathbb{E}[d(\mathbf{X}_{a1}^*, 
\mathbf{X}_{a2}^*)]\big| \nonumber\\ & \quad + (n_b-1)\big|\mathbb{E}[d(\mathbf{X}_{a1}^*, 
\mathbf{X}_{b1}^*)]
-\mathbb{E}[d(\mathbf{X}_{b1}^*, 
\mathbf{X}_{b2}^*)]\big| \nonumber\\
& \quad+ \sum_{c (\neq a,b) = 1}^{K} n_c\big|\mathbb{E}[d(\mathbf{X}_{a1}^*, \mathbf{X}_{c1}^*)]
-\mathbb{E}[d(\mathbf{X}_{b1}^*, 
\mathbf{X}_{c1}^*)]\big|,
\end{align}
where $\mathbf{X}_{a1}^* = (\langle X_{a1}, Z_{1} \rangle, \langle X_{a1}, Z_{2} \rangle, \dots, \langle X_{a1}, Z_{M} \rangle)$ as in \eqref{proj_def}.
It is easy to see that $\rho^*\!\left(F^{proj}_{a}, F^{proj}_{b}\right)=0$ if $F_a=F_b$. For proving the only if part, assume that $F_a\neq F_b$. In this case,
\begin{align*}
(n-2)\rho^*\!\left(
F^{proj}_{a}, F^{proj}_{b}
\right) &= (n_a-1)|A|+(n_b-1)|B|+R, \quad \mbox{where} \\
A &= \mathbb{E}[d(\mathbf{X}_{a1}^*, \mathbf{X}_{b1}^*)]
-\mathbb{E}[d(\mathbf{X}_{a1}^*, \mathbf{X}_{a2}^*)], \\
B &= \mathbb{E}[d(\mathbf{X}_{a1}^*, \mathbf{X}_{b1}^*)]
-\mathbb{E}[d(\mathbf{X}_{b1}^*, \mathbf{X}_{b2}^*)], \quad \mbox{and}\\
R &= \sum_{c (\neq a,b) = 1}^{K} n_c|\mathbb{E}[d(\mathbf{X}_{a1}^*, 
\mathbf{X}_{c1}^*)]-\mathbb{E}[d(\mathbf{X}_{b1}^*, 
\mathbf{X}_{c1}^*)]|.
\end{align*}
\noindent
Note that $R \geq 0$. Since $Z_1, \dots, Z_M$ are 
i.i.d., $\langle X, Z_q\rangle$ is identically distributed 
across $q = 1, \dots, M$ for any $X \in \mathcal{H}$. Using this fact and the definition of $d(\cdot,\cdot)$, we obtain
\begin{align*}
A+B
&= \mathbb{E}\big[\exp(-|\langle X_{a1},Z_{1}\rangle
-\langle X_{a2},Z_{1}\rangle|)\big]
+\mathbb{E}\big[\exp(-|\langle X_{b1},Z_{1}\rangle
-\langle X_{b2},Z_{1}\rangle|)\big]
\\
& \quad - 2\,\mathbb{E}\big[\exp(-|\langle X_{a1},Z_{1}\rangle
-\langle X_{b1},Z_{1}\rangle|)\big].
\end{align*}
\noindent
For any real-valued random variable $W$, we denote its law (i.e., distribution) by $\mathcal{L}(W)$. Define
\[
g(z)
=
\mathrm{MMD}^2_k\!\left(
\mathcal{L}(\langle X_{a1},z\rangle),
\mathcal{L}(\langle X_{b1},z\rangle)
\right),
\]
where $\mathrm{MMD}^2_k$ denotes the squared maximum mean 
discrepancy measure associated with the Laplace kernel
$k(u,v)=\exp(-|u-v|)$ for $u, v \in \mathbb{R}$, and $z \in \mathcal{H}$ is treated as 
a fixed deterministic projection direction. Let $\widetilde{Z}$ denote a 
generic random direction having the same distribution as $Z_1$, 
independent of all functional observations $X_1, \dots, X_n$. Then, it follows from the properties of conditional expectations that $A+B
=
\mathbb{E}_{\widetilde{Z}}\!\left[
g(\widetilde{Z})
\right].$
Since $F_a \neq F_b$ and both distributions satisfy Assumption \ref{assume-1}, the assumptions of Theorem 4.1 of \cite{cuesta2007sharp} are satisfied and consequently,
\[
\mathbb{P}\left[{\widetilde{Z}} \in\!\left\{ z \in \mathcal{H} : 
\mathcal{L}(\langle X_{a1},z\rangle)
\neq
\mathcal{L}(\langle X_{b1},z\rangle)
\right\}\right]=1.
\]
This implies that for almost every Gaussian direction $z$, the projected 
one-dimensional distributions of $F_a$ and $F_b$ differ.

Since the Laplace kernel $k(u,v)=\exp(-|u-v|)$ is a characteristic kernel
on $\mathbb{R}$ (see \cite{JMLR:v12:sriperumbudur11a}), it follows that 
\[
\mathbb{P}\left[{\widetilde{Z}} \in \!\left\{ z \in \mathcal{H} : 
g(z) > 0
\right\}\right] = 1, \text{ and consequently, } A + B = \mathbb{E}_{\widetilde{Z}}[g(\widetilde{Z})] > 0.
\]
\noindent
Since $A + B > 0$, we have 
$(n_a-1)|A| + (n_b-1)|B| > 0$, where the last 
inequality uses $n_r \ge 2$ for all $r = 1, \dots, K$, which 
ensures both coefficients $(n_a - 1)$ and $(n_b - 1)$ are strictly positive. We complete the proof by noting that $R \geq 0$ implies 
\[
(n_a-1)|A|+(n_b-1)|B|+R > 0, \text{ and hence, } \rho^*\!\left(
F^{proj}_{a},
F^{proj}_{b}
\right)>0.
\]
\end{proof}

This result shows that the population version of the MADD measure, computed between the distributions of projected functional data (via Gaussian random projections), preserves distributional differences whenever the underlying $\mathcal{H}$-valued functional data distributions are distinct. Consequently, it provides theoretical justification for using random projections in conjunction with MADD-based dissimilarity for clustering functional data.

\section{Clustering methodology: regular functional data}
\label{sec:methodology}
Recall that we observe a sample of $n$ functional observations $\{X_1,\ldots,X_n\}$ where each $X_i \in L^2(\mathcal{T})$ and each curve is observed over a finite discrete grid, say $\mathcal{I} \subset \mathcal{T}$, which is common for all observations. Each sample function is transformed into an $M$-dimensional vector using random projections based on a Gaussian process $\mathcal{GP}(0,C)$ as in the previous section.

\subsection{First stage of clustering via random projections}
The goal is to perform a $k$-means clustering using a MADD-based modified objective function applied on the projected data $\mathcal{D} = \{{\bf X}_1^*,\ldots,{\bf X}_n^*\}$ as in \eqref{proj_def}. To this end, define the following measure of within-cluster variance for a partition $\mathcal{S}=\{S_1,\dots,S_K\}$  of $\{1,2,\ldots,n\}$ into $K$ non-empty subsets: 
\begin{align}
\label{eq:within_cost}
\Phi^{(1)}(\mathcal{S})
=
\sum_{r=1}^{K}
\frac{1}{2|S_r|}
\sum_{i, j \in S_r}
\rho^2(\mathbf{X}_i^*, \mathbf{X}_j^*),
\end{align}
where $\rho(\cdot,\cdot)$ is the MADD dissimilarity defined in \eqref{MADD_def}. Although \cite{sarkar2019perfect} minimized the above objective function to obtain the optimal partition, we need to consider a scale-invariant version of the above function. This modification will be crucial when we compare the objective functions for various choices of the number of projections $M$ while developing the ensemble clustering in a later section. The modified objective function is defined as
\begin{eqnarray}
\label{eq:within_cost_new}
\widetilde{\Phi}^{(1)}(\mathcal{S}) &=& \Phi^{(1)}(\mathcal{S})/T^{(1)}, \ \ \mbox{where} \ \
T^{(1)} = \frac{1}{2n}\sum_{i,j=1}^n \rho^2(\mathbf{X}_i^*, \mathbf{X}_j^*)
\end{eqnarray}
represents the total variation among the projected data. The clusters are obtained in the first stage by performing a $k$-means type algorithm which chooses a partition $S$ that minimizes the objective function given in \eqref{eq:within_cost_new} for a prefixed value of $K$.

\subsection{Second stage of clustering via data-driven projections}
\label{subsec:two_stage}
Denote the clusters obtained in the first stage of clustering by $I_1, \ldots, I_K$, which is a partition of $\{1,2,\ldots,n\}$. We compute the cluster-wise sample mean functions as
\[
\widehat{\mu}_k(t) = \frac{1}{|I_k|} \sum_{i \in I_k} X_i(t), \quad k=1,\dots,K, \ t \in \mathcal{I}.
\]
We then center the curves according to their assigned clusters. For each $1 \leq i \leq n$,
\[
\widetilde{X}_i(t) = X_i(t) - \widehat{\mu}_{k}(t), \quad \mbox{if} \ i \in I_k, \ k=1,\dots,K, \ t \in \mathcal{I}.
\]
Using the centered sample $\{\widetilde{X}_i : i \in I_k, \ 1 \leq k \leq K\}$, we estimate the pooled covariance operator $\widehat{C}$ from the $K$ estimated cluster-wise covariance operators and obtain the estimated eigenpairs 
$\{(\widehat{\lambda}_j,\widehat{\phi}_j)\}_{j\ge1}$ (see, for example, \cite{horvath2012inference}). Conditional on $\widehat{C}$, we generate $M$ independent Gaussian random variables from $\mathcal{GP}(0,\widehat{C})$ as follows: 
\[
W_{q}(t) = \sum_{j\ge1} \sqrt{\widehat{\lambda}_j}\, \xi_{j q}\, \widehat{\phi}_j(t), \ 1 \leq q \leq M, \ t \in \mathcal{I},
\]
where $\xi_{j q} \stackrel{\text{i.i.d.}}{\sim} \mathcal{N}(0,1)$ for all $j \geq 1$ and $ 1\leq q \leq M$. Next, each original curve $X_i$ is projected onto $\{W_{q}\}_{q=1}^M$ to obtain
\[
\mathbf{X}_{i}^{**}
=
\big(\langle X_i, W_{1}\rangle,\dots,\langle X_i, W_{M}\rangle\big).
\]
To obtain the second stage clusters, we now apply a similar $k$-means-type clustering procedure as in the first stage to the data $\{\mathbf{X}_{i}^{**}\}_{i=1}^n$. Thus, the second stage clusters are obtained by minimizing the cost function
\begin{eqnarray}
\label{eq:within_cost_new_stage2}
&\widetilde{\Phi}^{(2)}(\mathcal{S}) = \Phi^{(2)}(\mathcal{S})/T^{(2)}, 
\mbox{with} \ \ \Phi^{(2)}(\mathcal{S}) = \sum_{r=1}^{K} \frac{1}{2|S_r|} \sum_{i, j \in S_r}
\rho^2(\mathbf{X}_i^{**}, \mathbf{X}_j^{**}) \ \ \mbox{and} \nonumber\\  
&T^{(2)} = \frac{1}{2n}\sum_{i,j=1}^n \rho^2(\mathbf{X}_i^{**}, \mathbf{X}_j^{**}), 
\end{eqnarray}
over partitions $\mathcal{S}=\{S_1,\dots,S_K\}$ of $\{1,2,\ldots,n\}$ into $K$ non-empty subsets.

\subsection{Ensemble clustering: aggregation over projection directions and number of projections}
The first stage of clustering described earlier depends both on the choice of the covariance operator $C$ in the $\mathcal{GP}(0,C)$ process considered earlier and the number of random projections $M$. These choices also crucially determine the performance of the second stage of clustering. In order to deal with this issue, we propose an ensemble procedure where we apply the two-stage clustering algorithm for various choices of $C$ and $M$. Further details on the specific choices of $C$ and $M$ are provided in Section \ref{sec:numerical}. Note that different choices of $C$ and $M$ produce finite-dimensional projections which differ in scale, and hence the corresponding within-cluster costs $\Phi^{(1)}(\mathcal{S})$ (respectively, $\Phi^{(2)}(\mathcal{S})$) appearing in \eqref{eq:within_cost} (respectively, \eqref{eq:within_cost_new_stage2}) are not comparable across these choices. Consequently, to ensure scale-invariance, we normalize $\Phi^{(i)}(\mathcal{S})$ by the total dispersion $T^{(i)}$ associated with the $i$-th stage of clustering, $i \in \{1,2\}$ (see \eqref{eq:within_cost_new} and \eqref{eq:within_cost_new_stage2}). For each combination of $C$ and $M$, the two-stage clustering algorithm is implemented and the minimum value, say $v_{C,M}$, of the scale-invariant cost functions $\widetilde{\Phi}^{(1)}(\mathcal{S})$ and $\widetilde{\Phi}^{(2)}(\mathcal{S})$ is obtained. The final clusters are those that correspond to the minimizer of $v_{C,M}$ over the various choices of $C$ and $M$. The complete clustering procedure is summarized in Algorithm~\ref{alg:grpm}. 

\begin{algorithm}
\renewcommand{\thealgorithm}{}
\caption{\textbf{TERP:} Two-stage Ensemble Random Projection clustering}
\label{alg:grpm}
\small
\begin{algorithmic}

\REQUIRE Functional data $\{X_i\}_{i=1}^n$, number of clusters $K$, projection families $\{\mathcal{GP}(0,C_l)\}_{l=1}^L$, set of number of projections $\mathcal{M}$

\vspace{0.1cm}

\FOR{$l=1$ to $L$}

\FOR{each $M \in \mathcal{M}$}

\vspace{0.05cm}

\STATE \textbf{Stage I}

\STATE Generate $\{Z_{l,q}\}_{q=1}^{M} \stackrel{\text{i.i.d.}}{\sim} \mathcal{GP}(0,C_l)$

\STATE For $i=1,2,\dots,n$, compute the projected vectors
\[
\mathbf{X}_{i,l}^{*}
=
\big(\langle X_i,Z_{l,1}\rangle,\dots,\langle X_i,Z_{l,M}\rangle\big)
\]

\STATE Obtain partition
$\mathcal P_{l,M}^{(1)}$
which minimizes \eqref{eq:within_cost_new}

\STATE Compute normalized cost $\widetilde{\Phi}_{l,M}^{(1)}
=
\widetilde{\Phi}^{(1)}(\mathcal P_{l,M}^{(1)})$

\vspace{0.05cm}

\STATE \textbf{Stage II}

\STATE Let the partition sets in $\mathcal P_{l,M}^{(1)}$ be denoted by $I_1,\ldots,I_K$

\STATE Compute cluster means for each $1 \leq k \leq K$ given by
\[
\widehat\mu_k = \frac{1}{|I_k|} \sum_{i \in I_k} X_i
\]

\STATE Center the curves cluster-wise: $\widetilde X_i = X_i - \widehat\mu_{k(i)}$, where $k(i)$ denotes the cluster label of $X_i$ under $\mathcal P_{l,M}^{(1)}$

\STATE Estimate pooled eigenpairs $(\widehat\lambda_j,\widehat\phi_j)$ using $\widetilde X_1,\ldots,\widetilde X_n$

\STATE Generate i.i.d. samples $W_{l,q} = \sum_{j \geq 1} \sqrt{\widehat\lambda_j} \xi_{jq} \widehat\phi_j$, $1 \leq q \leq M$, where $\xi_{jq} \stackrel{\text{i.i.d.}}{\sim} \mathcal{N}(0,1)$

\STATE For $i=1,2,\dots,n$, compute the projected vectors
\[
\mathbf{X}_{i,l}^{**}
=
\big(\langle X_i,W_{l,1}\rangle,\dots,\langle X_i,W_{l,M}\rangle\big)
\]

\STATE Obtain partition $\mathcal P_{l,M}^{(2)}$ which minimizes \eqref{eq:within_cost_new_stage2}

\STATE Compute normalized cost $\widetilde{\Phi}_{l,M}^{(2)} = \widetilde{\Phi}^{(2)}(\mathcal P_{l,M}^{(2)})$

\STATE Compute $v_{l,M} = \min\left(\widetilde{\Phi}_{l,M}^{(1)},\widetilde{\Phi}_{l,M}^{(2)}\right)$

\ENDFOR

\ENDFOR

\vspace{0.1cm}

\STATE \textbf{Final selection}
\begin{eqnarray}
\label{min-cost}
(l^\ast,M^\ast)
=
\underset{1 \leq l \leq L,\ M \in \mathcal{M}}{\arg\min}
\;
v_{l,M},
\quad 
s^\ast
=
\arg\min_{s \in \{1,2\}} \widetilde{\Phi}_{l^\ast,M^\ast}^{(s)}
\end{eqnarray}
\RETURN $\widehat{\mathcal S} = \mathcal P_{l^\ast,M^\ast}^{(s^\ast)}$
\ENSURE Final partition $\widehat{\mathcal S}$
\end{algorithmic}
\end{algorithm}

\section{Clustering methodology: irregular and fragmented functional data}
\label{sec:methodology-irregular}
The steps described in Algorithm \ref{alg:grpm} are applicable to functional data that are either fully observed or observed over a common dense grid. However, in many applications, each functional curve $X_i$ may be observed over a curve-specific grid $\mathcal{I}_i = \{t_{i1}, t_{i2}, \ldots, t_{im_i}\} \subset \mathcal{T}$ for $1 \leq i \leq n$. When this observation grid consists of a discrete non-contiguous set of points in the domain $\mathcal{T}$, the resulting functional data are referred to as irregularly observed. In contrast, when one or more contiguous segments (sub-intervals) of $\mathcal{T}$ are missing from the observation grids, the resulting functional data are termed partially observed or fragmented. Accommodating such forms of functional data requires partial modifications to the clustering algorithm. Note that the first stage of clustering remains unaltered because the inner products in the definition of ${\bf X}_i^*$ in \eqref{proj_def} can be computed using appropriate Riemann sums over the observation grid of $X_i$, with each projection direction $Z_q$ evaluated only on that same grid. However, the second stage of clustering needs to be modified. The most common approach is to apply the PACE procedure \citep{yao2005functional} to each of the clusters obtained in the first stage in order to estimate the pooled covariance operator required for the random projections in the second stage. The complete clustering procedure is summarized in Algorithm~\ref{alg:grpm_irreg}. 

\begin{algorithm}
\renewcommand{\thealgorithm}{}
\caption{\textbf{TERPM:} Modified TERP algorithm for irregular and fragmented functional data}
\label{alg:grpm_irreg}
\small
\begin{algorithmic}

\REQUIRE Functional data 
         $\{X_{i}(t) : t \in \mathcal{I}_i\}$ for $i=1,\dots,n$,
         number of clusters $K$,
         projection families $\{\mathcal{GP}(0,C_l)\}_{l=1}^L$,
         set of number of projections $\mathcal{M}$

\vspace{0.1cm}

\FOR{$l=1$ to $L$}

\FOR{each $M \in \mathcal{M}$}

\vspace{0.05cm}

\STATE \textbf{Stage I}

\STATE Generate $\{Z_{l,q}\}_{q=1}^{M} \stackrel{\text{i.i.d.}}{\sim} \mathcal{GP}(0,C_l)$

\STATE For $i=1,2,\dots,n$, compute the projected vectors
\[
\mathbf{X}_{i,l}^{*}
=
\big(\langle X_i,Z_{l,1}\rangle,\dots,\langle X_i,Z_{l,M}\rangle\big)
\]

\STATE Obtain partition $\mathcal P_{l,M}^{(1)}$ 
       which minimizes \eqref{eq:within_cost_new}

\STATE Compute normalized cost 
       $\widetilde{\Phi}_{l,M}^{(1)} 
       = \widetilde{\Phi}^{(1)}(\mathcal P_{l,M}^{(1)})$

\vspace{0.05cm}

\STATE \textbf{Stage II}

\STATE Let the partition sets in $\mathcal P_{l,M}^{(1)}$ 
       be denoted by $I_1,\ldots,I_K$

\STATE Estimate cluster mean functions for each $1 \leq k \leq K$ 
       using the PACE procedure applied to 
       $\{X_{i}(t) : t \in \mathcal{I}_i\}_{i \in I_k}$

\STATE Center the curves cluster-wise at the observed time points:
\[
\widetilde X_{i}(t) 
= 
X_{i}(t) - \widehat\mu_{k(i)}(t), 
\quad 
t \in \mathcal{I}_i, \ i=1,\dots,n,
\]
where $k(i)$ denotes the cluster label of $X_i$ 
under $\mathcal P_{l,M}^{(1)}$

\STATE Estimate pooled eigenpairs $(\widehat\lambda_j, \widehat\phi_j)$ 
       by applying PACE to the combined centered dataset 
       $\{\widetilde X_{i}(t) : t \in \mathcal{I}_i\}_{i=1,\dots,n}$

\STATE Generate i.i.d. samples 
       $W_{l,q} = \sum_{j \geq 1} \sqrt{\widehat\lambda_j}\, 
       \xi_{jq}\, \widehat\phi_j$, $\quad 1 \leq q \leq M$,
       where 
       $\xi_{jq} \stackrel{\text{i.i.d.}}{\sim} \mathcal{N}(0,1)$

\STATE For $i=1,2,\dots,n$, compute the projected vectors
\[
\mathbf{X}_{i,l}^{**}
=
\big(\langle X_i,W_{l,1}\rangle,\dots,\langle X_i,W_{l,M}\rangle\big),
\]

\STATE Obtain partition $\mathcal P_{l,M}^{(2)}$ 
       which minimizes \eqref{eq:within_cost_new_stage2}

\STATE Compute normalized cost 
       $\widetilde{\Phi}_{l,M}^{(2)} 
       = \widetilde{\Phi}^{(2)}(\mathcal P_{l,M}^{(2)})$

\STATE Compute $v_{l,M} = \min\left(\widetilde{\Phi}_{l,M}^{(1)},\widetilde{\Phi}_{l,M}^{(2)}\right)$

\ENDFOR

\ENDFOR

\vspace{0.1cm}

\STATE \textbf{Final selection}
\begin{eqnarray}
\label{min-cost-irreg}
(l^\ast,M^\ast)
=
\underset{1 \leq l \leq L,\ M \in \mathcal{M}}{\arg\min}
\;
v_{l,M},
\quad 
s^\ast
=
\arg\min_{s \in \{1,2\}} \widetilde{\Phi}_{l^\ast,M^\ast}^{(s)}
\end{eqnarray}

\RETURN $\widehat{\mathcal S} = \mathcal P_{l^\ast,M^\ast}^{(s^\ast)}$

\ENSURE Final partition $\widehat{\mathcal S}$

\end{algorithmic}
\end{algorithm}

\section{Numerical Studies}
\label{sec:numerical}
We now compare the performance of Algorithms \ref{alg:grpm} and \ref{alg:grpm_irreg} with various competitors across diverse functional data settings. We compare with the following methods: \texttt{PC} \cite{delaigle2019clustering}, \texttt{F1} and \texttt{F2} \cite{FADP-paper}, \texttt{$\text{D}_{\text{H}}$} and \texttt{$\text{D}_{\text{PC}}$} \cite{delaigle2019clustering}, \texttt{MC} \cite{leroy2022magma}, \texttt{sf} \cite{sasfun-paper} and \texttt{GPmix} \cite{akeweje2024learning}. The \texttt{PC} method was implemented using the \texttt{fdapace} package in CRAN. Implementations of the \texttt{$\text{D}_{\text{H}}$} and \texttt{$\text{D}_{\text{PC}}$} procedures are available at \url{http://researchers.ms.unimelb.edu.au/~aurored}. The \texttt{F1} and \texttt{F2} methods are implemented using the \texttt{FADPclust} package, the \texttt{sf} method is implemented via the \texttt{sasfunclust} package and the \texttt{MC} method is implemented using the \texttt{MagmaClustR} package in CRAN. The code for implementation of the \texttt{GPmix} is available at \url{https://github.com/EAkeweje/GPmix}. 

When evaluating the performance of Algorithm \ref{alg:grpm_irreg} alongside competing methods for irregularly observed and fragmented functional data, all clustering procedures are applied directly to the raw observed data without any preliminary smoothing or explicit trajectory reconstruction. To ensure a fair comparison, we consider only those competing methods that can be implemented directly on irregularly observed or fragmented functional data without requiring a prior smoothing step. Consequently, only the \texttt{PC}, \texttt{sf}, and \texttt{MC} methods are included in the comparison. For both Algorithms \ref{alg:grpm} and \ref{alg:grpm_irreg}, the inner products involved in the projection steps at each stage are approximated via numerical quadrature using the available observation points for the corresponding curves.

Random projection directions are generated as independent realizations from Gaussian processes defined on $[0,1]$. The following $L = 6$ Gaussian processes are considered:
{
\begin{align*}
&1. \ BM(t) = \mbox{standard Brownian motion}, 
\\
&2. \ BB(t) = \mbox{standard Brownian bridge},
\\
&3. \ {GP}_{\mathrm{Haar\text{-}poly}}(t)
= \sum_{j=0}^{J_{\max}}
   \sum_{k=1}^{2^j}
   \xi_{j,k}\,(2^j+k)^{-\alpha/2}  \psi_{j,k}(t),
\\
&4. \ {GP}_{\mathrm{Haar\text{-}exp}}(t)
= \sum_{j=0}^{J_{\max}}
   \sum_{k=1}^{2^j}
   \xi_{j,k}\,
   \exp\!\left(-\frac{2^j+k^2}{20000}\right) \psi_{j,k}(t),
\\
&5. \ {GP}_{\mathrm{Fourier\text{-}poly}}(t)
= \sum_{k=1}^{K_{\max}}
   k^{-\alpha/2}
   \{
   \eta_k\,\sin(2\pi k t)
   + \tilde{\eta}_k\,\cos(2\pi k t)
   \},
\\
&6. \ {GP}_{\mathrm{Fourier\text{-}exp}}(t)
= \sum_{k=1}^{K_{\max}}
   \exp\!\left(-\frac{k^2}{20000}\right)
   \{
   \eta_k\,\sin(2\pi k t)
   + \tilde{\eta}_k\,\cos(2\pi k t)
   \}.
\end{align*}
}
Here, $\xi_{j,k}\stackrel{\text{i.i.d.}}{\sim}\mathcal{N}(0,1)$, $\eta_k\stackrel{\text{i.i.d.}}{\sim}\mathcal{N}(0,1)$, $\tilde{\eta}_k\stackrel{\text{i.i.d.}}{\sim}\mathcal{N}(0,1)$ with $J_{\max}=4$, $K_{\max}=40$ and $\alpha = 2$. The $\{\psi_{j,k}\}$ denote the Haar wavelet basis on $[0,1]$. We consider exponential as well as polynomial decay of the eigenvalues to accommodate both smooth and non-smooth paths of the Gaussian projection directions. Haar-based projections are well suited for capturing localized features, while Fourier-based projections emphasize global smooth structure.

For each of the above six Gaussian processes, Algorithms \ref{alg:grpm} and \ref{alg:grpm_irreg} are implemented with the choice of the number of random projections $M \in \mathcal{M} = \{10, 50, 100, 500, 1000\}$. So, the ensemble clustering is based on $6 \times 5 = 30$ different combinations of Gaussian processes and $M$. It is observed in our simulations that the clustering performance does not change significantly if we increase $M$ beyond $1000$. Clustering performance is measured using the average Rand index \cite{rand1971} over $100$ independent Monte-Carlo repetitions. The Rand index takes values in $[0,1]$ with larger values indicating greater concordance between the estimated and the true partitions of the data. In all tables, the largest Rand index in each row is highlighted in \textbf{bold} and the second-largest in \textit{italics}. While implementing the competing methods, we observed that, in some instances (see Tables \ref{tab:comparison_real data}, \ref{tab:comparison_irregular_real_data}, and \ref{tab:comparison_fragmented_real_data}), certain methods either failed to converge or did not return any results. Such instances are indicated by hyphens in the corresponding tables.

\subsection{Analysis of simulated datasets: regular functional data} \label{simulation studies:reg}
For investigating the behaviour of Algorithm \ref{alg:grpm} and its competitors, we generate functional data on a common discrete grid of 100 equi-spaced points in the interval $\mathcal{T} = [0,1]$. Let $n_l$ denotes the number of observations drawn from the $l$-th population and $X_{l}(t)$ denotes an observation from the $l$-th population for $l =1,\ldots, K$. We consider six two-population models (Models 1--6) and four three-population models (Models 7--10).

\medskip

\noindent \textbf{Model 1} [Location change]:  $X_{l}(t) = \sum_{j=1}^{40}\sqrt{\theta_{j}}Z_{j}\phi_{j}(t) + \mu_{l}(t)$, $l = 1, 2$,
where for each $j$, $\phi_{j}(t) = \sqrt{2}\sin{(jt\pi)}$, $\theta_{j} = j^{-1.05}$, $Z_{j} \overset{\mathrm{i.i.d.}}{\sim} \mathcal{N}(0,1)$ for $j = 1, 2, \dots, 40$, $\mu_1(t) = 2(t^2-1/3)$ and $\mu_2(t) = 0$.

\smallskip

\noindent \textbf{Model 2} [Change in eigenvalues]:    $X_{l}(t) = \sum_{j=1}^{40}\sqrt{\theta_{jl}}Z_{j}\phi_{j}(t)$, $l = 1, 2$,
where for $j=1,2,\dots,40$, $\phi_{j}(t) = \sqrt{2}\sin{(jt\pi)}$, $\theta_{j1} = j^{-2}, \theta_{j2} = 2^{-j}$, $Z_{j} \overset{\mathrm{i.i.d.}}{\sim} \mathcal{N}(0,1)$. 

\smallskip

\noindent \textbf{Model 3} [Location change]:  $X_{l}(t) = \sum_{j=1}^{40}\sqrt{\theta_{j}}Z_{j}\phi_{j}(t) + \mu_{l}(t)$, $l = 1, 2$,
where for each $j$, $\phi_{j}(t) = \sqrt{2}\sin{(2jt\pi)}$, $\theta_{j} = j^{-1.05}$, $Z_{j} \overset{\mathrm{i.i.d.}}{\sim} t_{3}/\sqrt{3}$ for $j = 1, 2, \dots, 40$, $\mu_1(t) = \sqrt{|t-0.5|}$ and $\mu_2(t) = 0$.

\smallskip

\noindent \textbf{Model 4} [Change in eigenvalues]:    $X_{l}(t) = \sum_{j=1}^{40}\sqrt{\theta_{jl}}Z_{j}\phi_{j}(t)$, $l = 1, 2$,
where for $j=1,2,\dots,40$, $\phi_{j}(t) = \sqrt{2}\sin{(jt\pi)}$, $\theta_{j1} = j^{-2}, \theta_{j2} = e^{-j}$, $Z_{j} \overset{\mathrm{i.i.d.}}{\sim} t_{3}/\sqrt{3}$. 

\smallskip

\noindent \textbf{Model 5} [Location change]: $X_{l}(t) = \sum_{j=1}^{40}(\sqrt{\theta_{j}}Z_{j}+\mu_{lj})\phi_{j}(t)$, $l = 1, 2$,
where for each $j$, $\phi_{j}(t) = \sqrt{2}\sin{(jt\pi)}$, $\theta_{j} = j^{-2}$, $Z_{j} \overset{\mathrm{i.i.d.}}{\sim} \mathcal{N}(0,1)$ for $j = 1, 2, \dots, 40$, $\mu_{lj} = 0$ for $j>4$ for each $l=1,2$, $(\mu_{11},\mu_{12},\mu_{13},\mu_{14})=(0,-0.5,1,-0.5)$ and $(\mu_{21},\mu_{22},\mu_{23},\mu_{24})=(0,-0.75,0.75,-0.75)$.

\smallskip

\noindent \textbf{Model 6} [Location change]: $X_{1}(t)=BB(t), X_{2}(t)=BB(t)+\sqrt{t}$.

\smallskip

\noindent \textbf{Model 7} [Location change]:
$X_l(t) = \sum_{j=1}^{40} \sqrt{\theta_j} Z_j \phi_j(t) + \mu_l(t), 
\quad l = 1,2,3,
$ where $\phi_j(t) = \sqrt{2}\sin(j\pi t)$, $\theta_j = j^{-1.05}$ and 
$Z_j \overset{\mathrm{i.i.d.}}{\sim} \mathcal{N}(0,1)$ for $j = 1,2,\dots,40$. 
The mean functions are given by
$\mu_1(t) = 2\left(t^2 - \frac{1}{3}\right), \quad 
\mu_2(t) = 0, \quad 
\mu_3(t) = -2\left(t^2 - \frac{1}{3}\right)$.

\smallskip

\noindent \textbf{Model 8} [Change in eigenvalues]:
$X_l(t) = \sum_{j=1}^{40} \sqrt{\theta_{jl}} Z_j \phi_j(t)$, 
$l = 1,2,3,$
where $\phi_j(t) = \sqrt{2}\sin(j\pi t)$ and 
$Z_j \overset{\mathrm{i.i.d.}}{\sim} \mathcal{N}(0,1)$. The eigenvalues are given by
$\theta_{j1} = j^{-2}, \quad 
\theta_{j2} = 4^{-j}, \quad 
\theta_{j3} = e^{-j}, 
\quad j = 1,2,\dots,40$.

\smallskip

\noindent \textbf{Model 9} [Location change]:
$X_l(t) = \sum_{j=1}^{40} \left( \sqrt{\theta_j} Z_j + \mu_{lj} \right)\phi_j(t)$,
$l = 1,2,3$,
where $\phi_j(t) = \sqrt{2}\sin(j\pi t)$, $\theta_j = j^{-2}$ and 
$Z_j \overset{\mathrm{i.i.d.}}{\sim} \mathcal{N}(0,1)$. 
The coefficient shifts satisfy $\mu_{lj} = 0$ for $j > 4$ with $(\mu_{11},\mu_{12},\mu_{13},\mu_{14}) = (0,-0.5,1,-0.5)$, $(\mu_{21},\mu_{22},\mu_{23},\mu_{24}) = (0,-0.75,0.75,-0.75)$ and $(\mu_{31},\mu_{32},\mu_{33},\mu_{34}) = (0,-1.0,0.5,-1.0)$.

\smallskip

\noindent \textbf{Model 10} [Location change]:
$X_1(t) = BB(t), \quad 
X_2(t) = BB(t) + \sqrt{t}, \quad 
X_3(t) = BB(t) - \sqrt{t}$,
where $BB(t)$ denotes a standard Brownian bridge on $[0,1]$.

In all of the above models except for Models 3 and 4, $Z_{j}$ $\overset{\mathrm{i.i.d.}}{\sim}$ $\mathcal{N}(0,1)$ and in Models 3 and 4, $Z_{j} \overset{\mathrm{i.i.d.}}{\sim} t_{3}/\sqrt{3}$. In Models 1, 3, 5, 6, 7, 9 and 10, the component populations differ only in their mean functions, while for the remaining models, the difference is in their covariance operators, specifically, in the structure of the eigenvalues. Differences in the decay rates of eigenvalues reflect varying levels of smoothness across the component populations. This makes clustering challenging, as the populations differ only in smoothness while their means remain unchanged. Models 3 and 4 involve $t$-processes to evaluate the robustness of the competing methods under heavy-tailed settings, while the remaining  models are based on Gaussian processes. The performance of Algorithm \ref{alg:grpm} and its competitors is reported in Table \ref{tab:comparison}.

\begin{table}[htbp]
\centering
\caption{Comparison of clustering methods across Models 1--10 (Equi-spaced time points)}\label{tab:comparison}
\resizebox{\textwidth}{!}{
\begin{tabular}{lccccccccccc}
\toprule
\textbf{Model} & \textbf{Sizes} & \textbf{TERP} & \textbf{PC} & \textbf{D$_\text{H}$} & \textbf{D$_\text{PC}$} & \textbf{F1} & \textbf{F2}&  \textbf{GPmix}& \textbf{sf}& \textbf{MC} \\
\midrule
1 & $(30,30)$  &       0.834&       0.703&       \textit{0.884}&  0.696&       0.501&        0.79&  0.511& \textbf{0.898}& 0.526\\
& $(30,70)$  &       0.81&       0.545&       \textit{0.848}&    0.671&   0.505&        0.508&  0.551& \textbf{0.912}&0.595\\
2 & $(30,30)$  &       \textbf{0.832} &       0.502&       \textit{0.506}&   \textit{0.506}&    0.492&       0.494&  0.497& 0.492& 0.492\\
& $(30,70)$  &       \textbf{0.828}&       0.534&       0.506&   0.512&    0.535&        0.56&  0.545& \textit{0.576}& \textit{0.576}\\
3 & $(30,30)$  &       \textit{0.752}&       0.504&       \textbf{0.8}&    0.573&   0.501&        0.506&  0.498& 0.65& 0.492\\
& $(30,70)$  &       \textit{0.74}&       0.535&       \textbf{0.775}&    0.577&   0.56&        0.593&  0.54& 0.736& 0.576\\
4 & $(30,30)$  &       \textbf{0.8}&       \textit{0.565}&       0.51&    0.533&      0.492 &        0.497&  0.499& 0.492& 0.492\\
& $(30,70)$  &       \textbf{0.855}&       \textit{0.612}&       0.513&    0.599&   0.593&        0.593&  0.578& 0.576& 0.576\\
5 & $(30,30)$  &       \textit{0.864}&       0.814&       0.731&    \textbf{0.959}&   0.765&        0.636&  0.539& 0.492& 0.506\\
& $(30,70)$  &       \textit{0.834}&       0.811&       0.644&    \textbf{0.949}&   0.677&        0.677&  0.594& 0.576& 0.619\\
6 & $(30,30)$  &       \textit{0.787}&       0.765&       0.734&    0.727&   0.79&        0.655&  0.556& \textbf{0.877}& 0.591\\
& $(30,70)$  &       0.718&       0.715&       \textit{0.729}&    0.707&   0.689&        0.621&  0.578& \textbf{0.858}& 0.657\\
7 & $(30,30,30)$  &       \textit{0.88}&       0.751&       \textbf{0.881}&    0.831&   0.724&        0.725&  0.489& 0.829& 0.567\\
& $(30,50,70)$  &       0.867&       0.708&       \textit{0.87}&    0.836&   0.679&        0.665&  0.482& \textbf{0.883}& 0.482\\
8 & $(30,30,30)$  &       \textbf{0.839}&       0.56&       0.564&    \textit{0.574}&   0.519&        0.55&  0.397& 0.326& 0.331\\
& $(30,50,70)$  &       \textbf{0.839}&       0.552&       0.554&    \textit{0.575}&   0.432&        0.512&  0.413& 0.365& 0.371\\
9 & $(30,30,30)$  &       \textbf{0.933}&       \textit{0.74}&       0.638&    0.696&   0.628&        0.686&  0.514& 0.326& 0.545\\
& $(30,50,70)$  &       \textbf{0.947}&       0.71&       0.62&   0.686&    \textit{0.728}&        0.669&  0.514& 0.368& 0.478\\
10 & $(30,30,30)$  &       \textit{0.841}&       0.548&       0.56&    0.561&   0.542&        0.539&  0.58& \textbf{0.891}& 0.708\\
& $(30,50,70)$  &       \textit{0.856}&       0.545&       0.548&    0.548&   0.521&        0.514&  0.622& \textbf{0.889}& 0.611\\
\bottomrule
\end{tabular}
}
\end{table}

It is evident from Table~\ref{tab:comparison} that Algorithm \ref{alg:grpm} attains either the best or the second-best performance across nearly all simulation settings. The only exceptions occur for Model 1 and the unbalanced sample size scenarios under Models 6 and 7. Even in these cases, its performance remains competitive and closely aligned with that of the leading method. In contrast, none of the competing procedures demonstrate such consistent performance across the diverse range of simulation models considered. In models characterized by scale differences, all competing methods fail. Similarly, under the heavy-tailed model with location differences, most methods fail, with the exceptions of \texttt{$\text{D}{\text{H}}$} and \texttt{sf}. Notably, the \texttt{$\text{D}_{\text{PC}}$} method, despite leveraging principal component directions, is unable to fully exploit this structural information. A particularly instructive comparison arises between Models 5 and 9, both of which involve location differences along principal component directions. While \texttt{$\text{D}_{\text{PC}}$} performs optimally in Model 5, its performance degrades significantly in Model 9, indicating a sub-optimal use of the information in the principal component directions. In contrast, Algorithm \ref{alg:grpm} effectively leverages the principal component structure in both settings, resulting in consistently strong performance. Although the \texttt{GPmix} method is based on random projections, it has comparatively weaker performance which indicates that it does not fully capitalize on the potential advantages of random projections unlike Algorithm \ref{alg:grpm}.

One of the reasons behind the strong performance of Algorithm \ref{alg:grpm} is its two-stage clustering framework, coupled with an ensemble strategy that aggregates results over multiple choices of Gaussian processes and number of projections. In particular, the second stage of clustering plays a crucial role in improving the average Rand index for Models 2, 4, and 8 (characterized by scale differences), as well as in several scenarios involving location differences.

\subsection{Analysis of real datasets: regular functional data} \label{real data analysis:reg}
We apply Algorithm \ref{alg:grpm} and the competing methods to eleven benchmark real datasets which are observed over regular grids. Ten of these are obtained from the UEA and UCR Time Series Classification Repository and the remaining one is the Meatspectrum dataset (spectrometric measurements of fat content in meat) available in the \texttt{faraway} package in CRAN. For clustering the Meatspectrum data, we have used the second order derivative curves as has been done in some of the existing literature (see \cite{ferraty2003curves, rossi2006support, li2008classification}). For the Berkeley Growth dataset and the Wheat dataset, clustering is performed using the first order derivative curves, which is consistent with existing literature (see \cite{slaets2012phase, delaigle2019clustering}).

\begin{table}[htbp]
\centering
\caption{Comparison of clustering methods across different datasets (Regular grid of time points)}\label{tab:comparison_real data}
\resizebox{\textwidth}{!}{
\begin{tabular}{lcccccccccc}
\toprule
\textbf{Dataset} & \textbf{K} & \textbf{TERP} & \textbf{PC} & \textbf{D\textsubscript{H}} & \textbf{D\textsubscript{PC}} & \textbf{F1} & \textbf{F2}&  \textbf{GPmix}& \textbf{sf}& \textbf{MC} \\
\midrule
Meatspec & 2  &       \textbf{0.889}&       \textit{0.862}&       0.759&  0.854&       0.701&        0.499&  0.708& -& 0.498\\
Coffee & 2  &       \textit{0.964}&       0.805&       0.514&   \textbf{1}&    \textbf{1}&        0.544&  0.494& -& 0.492\\
Flours & 3  &       \textbf{0.766}&       0.686&       0.715&    0.683&   \textit{0.747}&        0.658&  0.582& 0.688& 0.538\\
ECG & 2  &       \textbf{0.644}&       \textit{0.618}&  0.613     &   \textbf{0.644} &      0.554 &     0.574 &  0.599& 0.552&  0.52\\
Arrowhead &  3 &       \textit{0.625}&       0.596&       0.581&    0.588&   \textbf{0.685}&        0.623&  0.401& -& 0.53\\
Satellite &  2 &       \textbf{0.838}&       0.653&       0.635&    0.631&   0.534&        0.524&  \textit{0.707}& 0.62& 0.499\\
Berkeley & 2  &       \textbf{0.897}&       0.806&       \textit{0.878}&    \textit{0.878}&   0.613&        0.671&  0.511& 0.508& 0.496\\
Olive-oil &  4 &       0.777&       \textbf{0.845}&       0.718&    0.767&   \textit{0.783}&        -&  0.652& -& 0.288\\
Wheat & 2  &       \textit{0.728}&       \textit{0.728}&       0.715&    \textbf{0.729}&   0.519&        0.505& \textit{0.728}& -& 0.511\\
BME &  3 &       \textbf{0.833}&       0.611&       0.597&    0.609&   \textit{0.613}&        0.56&  0.564& 0.555& 0.515\\
Synthetic Control &  6 &       \textbf{0.925}&       \textit{0.87}&       0.849&    0.848&   0.643&        0.825&  0.768& 0.832& 0.761\\
\bottomrule
\end{tabular}
}
\end{table}

\begin{figure}
    \centering
    \includegraphics[scale=0.35]{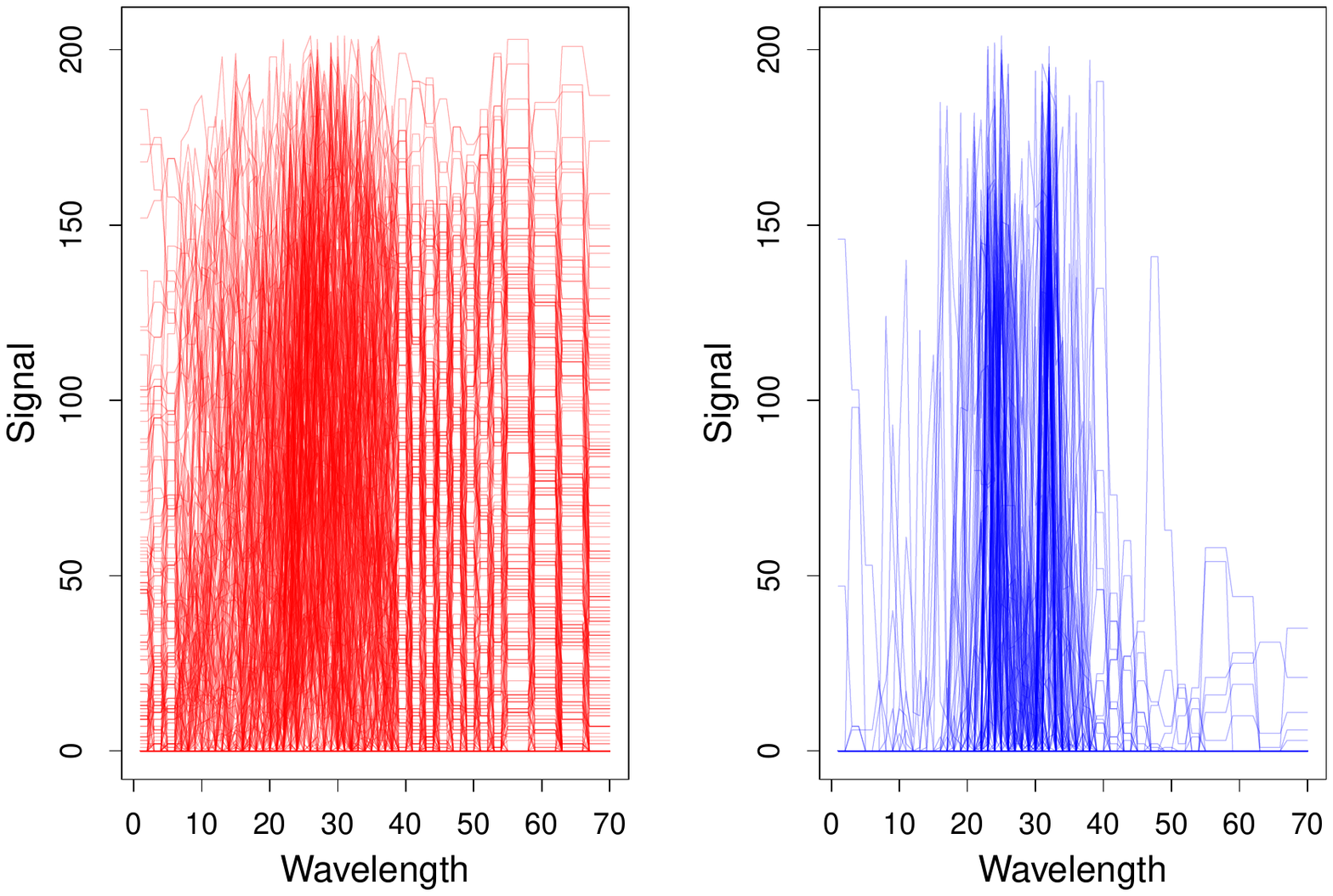}
    \includegraphics[scale=0.45]{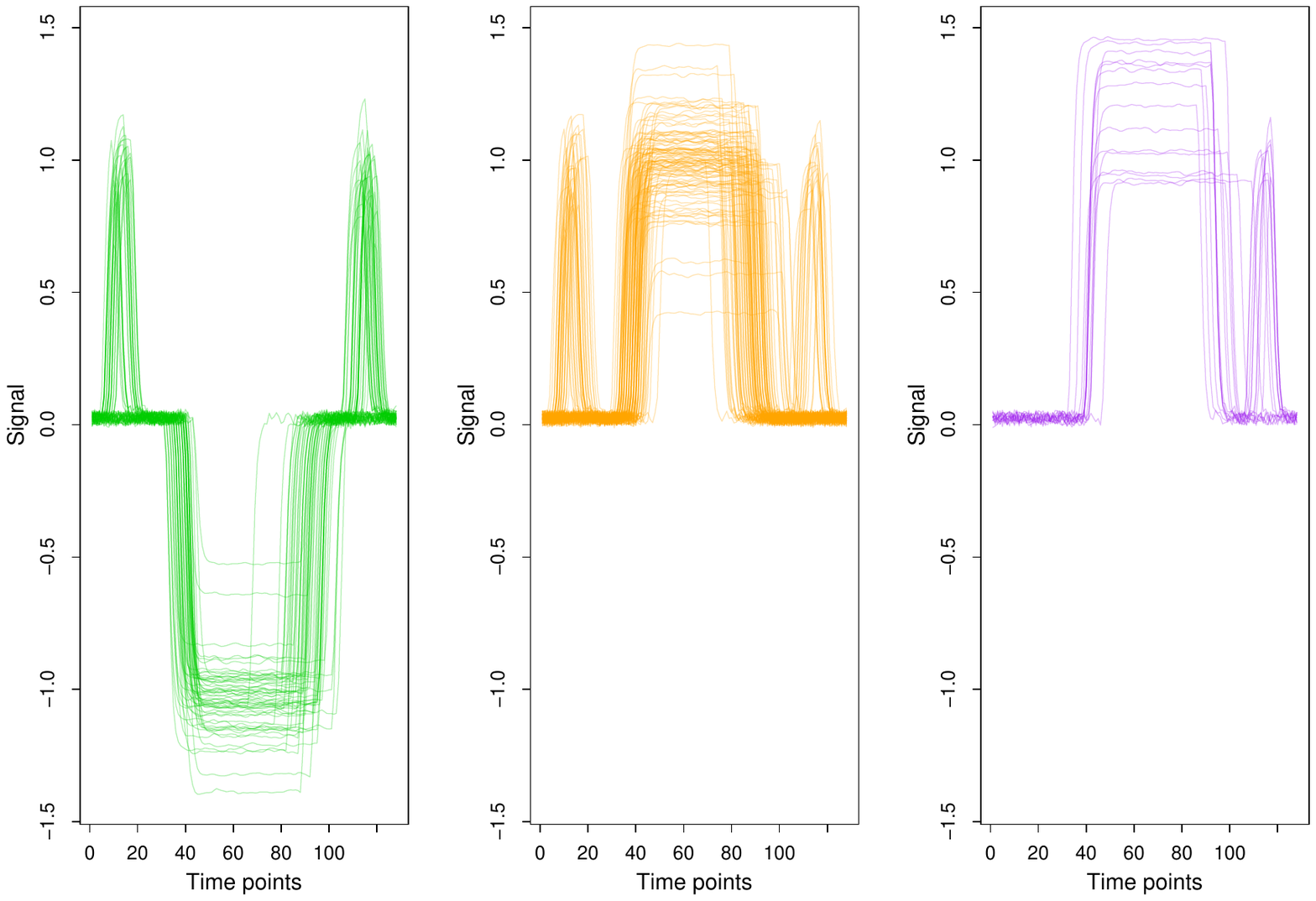}
    \caption{Plots of Satellite and BME datasets showing the different clusters obtained}
    \label{fig:regular-real-plots}
\end{figure}

Table~\ref{tab:comparison_real data} provides the Rand index for these eleven benchmark functional datasets which are observed on regular grids. The Rand index is computed by comparing the estimated class labels with the true class labels which are available for these datasets. Overall, Algorithm \ref{alg:grpm} has the best or the second best performance for all the datasets except the Olive Oil dataset, where its performance is very close to that of the second best method. As in the  simulation study, the second stage of clustering plays a crucial role in improving the average Rand index of Algorithm \ref{alg:grpm} in five out of the eleven real datasets considered. Moreover, it is observed that the best Rand indices are obtained for a variety of combinations of Gaussian processes and number of projections which clearly demonstrate the utility of the ensemble strategy in adapting to various types of functional data. For the purpose of demonstration, we have provided the plots of two datasets (Satellite and BME) where the clusters obtained by applying Algorithm \ref{alg:grpm} are shown using different colours (see Figure \ref{fig:regular-real-plots}).

\subsection{Analysis of simulated datasets: irregular and fragmented functional data} \label{simulation studies:irreg}
We next examine the performance of Algorithm \ref{alg:grpm_irreg} and some of its competitors under \emph{irregularly observed functional data}. In this setup, functional data are only recorded on discrete grids that differ across observations, and this could potentially affect clustering performance. For this study, we consider Models 1--10 used in Section \ref{simulation studies:reg} earlier and introduce irregularity in the following manner. The functional data from each of the models are initially recorded over a fine regular grid of $1000$ equi-distant time points over $[0,1]$. For each curve, we randomly select $100$ time points from this fine regular grid without replacement and the function values are retained at only these selected locations. As a result, each curve is observed on a curve-specific and randomly chosen set of $100$ time points, leading to irregular observation grids across subjects. Table~\ref{tab:simulation_irregular} reports the average Rand index for the competing clustering methods under this irregular grid setup.

\begin{table}[htbp]
\centering
\caption{Comparison of clustering methods across Models 1--10 (Irregular Data).}
\label{tab:simulation_irregular}
\resizebox{\textwidth}{!}{
\begin{tabular}{lcccccc}
\toprule
\textbf{Model} & \textbf{Sizes} & \textbf{TERPM} &  \textbf{PC}& \textbf{sf}& \textbf{MC} \\
\midrule
1 & $(30,30)$  &       \textit{0.811}&                0.784& \textbf{0.886}& 0.517\\
& $(30,70)$  &       \textit{0.794}&       0.759& \textbf{0.899}& 0.556\\
2 & $(30,30)$  &       \textbf{0.508}&               \textit{0.501}& 0.492& 0.492\\
& $(30,70)$  &       0.522&              0.505& \textbf{0.576}& \textit{0.575}\\
3 & $(30,30)$  &       \textbf{0.721}&               \textit{0.712}& 0.69& 0.492\\
& $(30,70)$  &       \textit{0.717}&                0.696& \textbf{0.773}& 0.576\\
4 & $(30,30)$  &       \textbf{0.596}&               \textit{0.504}& 0.492& 0.492\\
& $(30,70)$  &       \textbf{0.683}&               0.524& \textit{0.576}& \textit{0.576}\\
5 & $(30,30)$  &       \textbf{0.888}&               \textit{0.879}& 0.492& 0.494\\
& $(30,70)$  &       \textbf{0.87}&               \textit{0.849}& 0.576& 0.552\\
6 & $(30,30)$  &       0.778&               \textit{0.808}& \textbf{0.872}& 0.5\\
& $(30,70)$  &       0.722&               \textit{0.769}& \textbf{0.836}& 0.512\\
7 & $(30,30,30)$  &       \textbf{0.854}&               \textit{0.804}& 0.795& 0.473\\
& $(30,50,70)$  &       \textit{0.827}&               0.792& \textbf{0.865}& 0.474\\
8 & $(30,30,30)$  &       \textbf{0.65}&               \textit{0.547}& 0.326& 0.33\\
& $(30,50,70)$  &       \textbf{0.588}&               \textit{0.545}& 0.365& 0.373\\
9 & $(30,30,30)$  &       \textbf{0.899}&               \textit{0.87}& 0.326& 0.457\\
& $(30,50,70)$  &       \textbf{0.89}&               \textit{0.844}& 0.365& 0.488\\
10 & $(30,30,30)$  &       \textit{0.823}&               0.808& \textbf{0.867}& 0.561\\
& $(30,50,70)$  &       \textit{0.838}&               0.818& \textbf{0.882}& 0.542\\
\bottomrule
\end{tabular}
}
\end{table}

It is observed from Table \ref{tab:simulation_irregular} that there is a drop in the clustering accuracy from the setup of regular grids which is expected. It is observed that Algorithm \ref{alg:grpm_irreg} outperforms its competitors and in fact has the best or the second best performance in almost all scenarios except Models 2 and 6. For Model 6, our method has competitive performance. However, for Model 2, our method and all the other competing methods perform poorly.

We also compare the performance of the clustering algorithms in the setting of \emph{fragmented functional data}. In this setup, observations are unavailable over certain sub-intervals of the domain, which could potentially affect the clustering performance like the irregular observation grid setup considered earlier. For the comparative study, we use Models 1--10 considered earlier and incorporate a mechanism which creates fragmented functional data for each of these models. Specifically, let the complete observation domain $\mathcal{T} = [0,1]$ be partitioned into $R=10$ disjoint sub-intervals of equal length, $\mathcal{S}_1,\mathcal{S}_2,\dots,\mathcal{S}_{10}$. For each subject $i$, an index $r_i$ is sampled uniformly from $\{1,2,\dots,10\}$ and all observation points within the segment $\mathcal{S}_{r_i}$ are removed. The observed domain for the $i$th curve is therefore given by $\mathcal{T}_i = \mathcal{T} \setminus \mathcal{S}_{r_i}$. This mechanism induces a missingness rate of $10\%$ per curve, introducing a single contiguous gap at a random location along the domain. Table~\ref{tab:simulation_fragmented} reports the average Rand index for the competing clustering methods under this fragmentation scheme.

\begin{table}[h!]
\centering
\caption{Comparison of clustering methods across Models 1--10 (Fragmented Data).}
\label{tab:simulation_fragmented}
\resizebox{\textwidth}{!}{
\begin{tabular}{lcccccc}
\toprule
\textbf{Model} & \textbf{Sizes} & \textbf{TERPM} & \textbf{PC}& \textbf{sf}& \textbf{MC}  \\
\midrule
1 & $(30,30)$  &       \textit{0.791}&    0.621 & \textbf{0.869} & 0.517         \\
& $(30,70)$  &       \textit{0.766}&       0.62& \textbf{0.888}& 0.591\\
2 & $(30,30)$  &       \textbf{0.664}&        \textit{0.514}& 0.492& 0.492       \\
& $(30,70)$  &       \textbf{0.704}&       0.505& \textit{0.576}& 0.575\\
3 & $(30,30)$  &       \textbf{0.699}&               0.546& \textit{0.658}& 0.492\\
& $(30,70)$  &       \textbf{0.758}&       0.549& \textit{0.733}& 0.576\\
4 & $(30,30)$  &       \textbf{0.666}&               \textit{0.512}& 0.492& 0.492\\
& $(30,70)$  &       \textbf{0.751}&       \textit{0.581}& 0.576& 0.576\\
5 & $(30,30)$  &       \textbf{0.78}&               \textit{0.67}& 0.492& 0.516\\
& $(30,70)$  &       \textbf{0.744}&       \textit{0.679}& 0.576& 0.612\\
6 & $(30,30)$  &       0.777&               \textit{0.795}& \textbf{0.916}& 0.603\\
& $(30,70)$  &       0.73&       \textit{0.75}& \textbf{0.885}& 0.671\\
7 & $(30,30,30)$  &       \textbf{0.824}&               0.753& \textit{0.805}& 0.576\\
& $(30,50,70)$  &       \textit{0.808}&       0.749& \textbf{0.849}& 0.493\\
8 & $(30,30,30)$  &       \textbf{0.722}&               \textit{0.507}& 0.326& 0.329\\
& $(30,50,70)$  &       \textbf{0.655}&       \textit{0.534}& 0.365& 0.376\\
9 & $(30,30,30)$  &       \textbf{0.827}&               \textit{0.774}& 0.326& 0.539\\
& $(30,50,70)$  &       \textbf{0.807}&       \textit{0.775}& 0.373& 0.486\\
10 & $(30,30,30)$  &       \textit{0.826}&               0.823& \textbf{0.906}& 0.721\\
& $(30,50,70)$  &      \textit{0.845} &       0.828& \textbf{0.902}& 0.611\\
\bottomrule
\end{tabular}
}
\end{table}

It is observed from Table \ref{tab:simulation_fragmented} that there is a drop in the clustering accuracy from the setup of regular observation grids which is expected. The proposed \ref{alg:grpm_irreg} algorithm is either the best or the second best performer in all scenarios except Model 6, where it is still a strong performer.

\subsection{Analysis of real datasets: irregular and fragmented functional data}
We now investigate the performance of the proposed Algorithm \ref{alg:grpm_irreg} and its competitors for irregularly observed and fragmented real datasets. We construct irregular versions of the real datasets considered in Section \ref{real data analysis:reg} in the following manner. For a given curve which is observed over $T$ time points, say $\{t_1,\ldots,t_T\}$, we select $\lfloor 0.1T \rfloor$ time points uniformly at random without replacement and remove the corresponding function values. The resulting curves are thus observed on curve-specific subsets of the original time grid, producing irregular observation patterns across subjects. We also consider the \emph{Medfly} dataset, which is naturally observed on irregular and subject-specific time grids. This dataset was collected by \cite{carey1998medfly} in a large-scale bio-demographic study at the University of California, Davis. The data consist of daily egg counts for approximately 1,000 individually monitored female flies, recorded from emergence until death. Each fly yields a dense longitudinal trajectory observed over a subject-specific lifespan, resulting in dense but irregular functional observations across individuals. This dataset is widely used in functional data analysis, particularly for clustering and classification under irregular designs. The original dataset is publicly available at \url{http://anson.ucdavis.edu/~mueller/data/medfly1000.html}.

\begin{table}[h!]
\centering
\caption{Comparison of clustering methods across real data (irregular grid of time points).}
\label{tab:comparison_irregular_real_data}
\resizebox{\textwidth}{!}{
\begin{tabular}{lcccccc}
\toprule
\textbf{Dataset} & \textbf{K} & \textbf{TERPM} & \textbf{PC}& \textbf{sf}& \textbf{MC}  \\
\midrule
Meatspec & 2  &       \textbf{0.83}&    \textit{0.752}& -           &0.501\\
Coffee& 2& \textbf{0.584}& \textit{0.494}& -& 0.492\\
Flours & 3  &       \textbf{0.768}&        0.686& \textit{0.688}&-       \\
ECG & 2  &       \textbf{0.655}&               \textit{0.608}& 0.552& 0.499\\
Arrowhead & 3  &       \textit{0.584}&               \textbf{0.585}& -& 0.491\\
Satellite & 2  &       \textbf{0.688}&               \textit{0.649}& 0.624& 0.5\\
Berkeley & 2  &       \textbf{0.806}&               \textit{0.789}& 0.624& 0.499\\
Olive-oil & 4  &       \textbf{0.766}&               \textit{0.614}&-& 0.288\\
Wheat & 2  &       \textbf{0.689}&               \textbf{0.689}& -& \textit{0.511}\\
BME & 3  &       \textbf{0.615}&               \textit{0.555}& \textit{0.555}& 0.551\\
Synthetic Control & 6  &       \textit{0.838}&               \textbf{0.841}& 0.831& 0.542\\
Medfly & 2& \textbf{0.766}& \textit{0.637}& 0.5& 0.5\\
\bottomrule
\end{tabular}
}
\end{table}

Table~\ref{tab:comparison_irregular_real_data} presents the Rand indices for Algorithm \ref{alg:grpm_irreg} and its competitors on the real functional datasets considered above that are observed on irregular grids. It is seen that the proposed method is the best performer for ten out of the twelve datasets considered, while it is the second best performer in the remaining two datasets (Arrowhead and Synthetic Control datasets). For some competing methods, there were instances when they did not converge or did not return any result. These cases are marked with a hyphen in Table \ref{tab:comparison_irregular_real_data}. For the purpose of demonstration, we have provided the plots of two datasets (Berkeley and Medfly) where the clusters obtained by applying Algorithm \ref{alg:grpm_irreg} are shown using different colours (see Figure \ref{fig:irregular-real-plots}).

\begin{figure}
    \centering
    \includegraphics[scale=0.35]{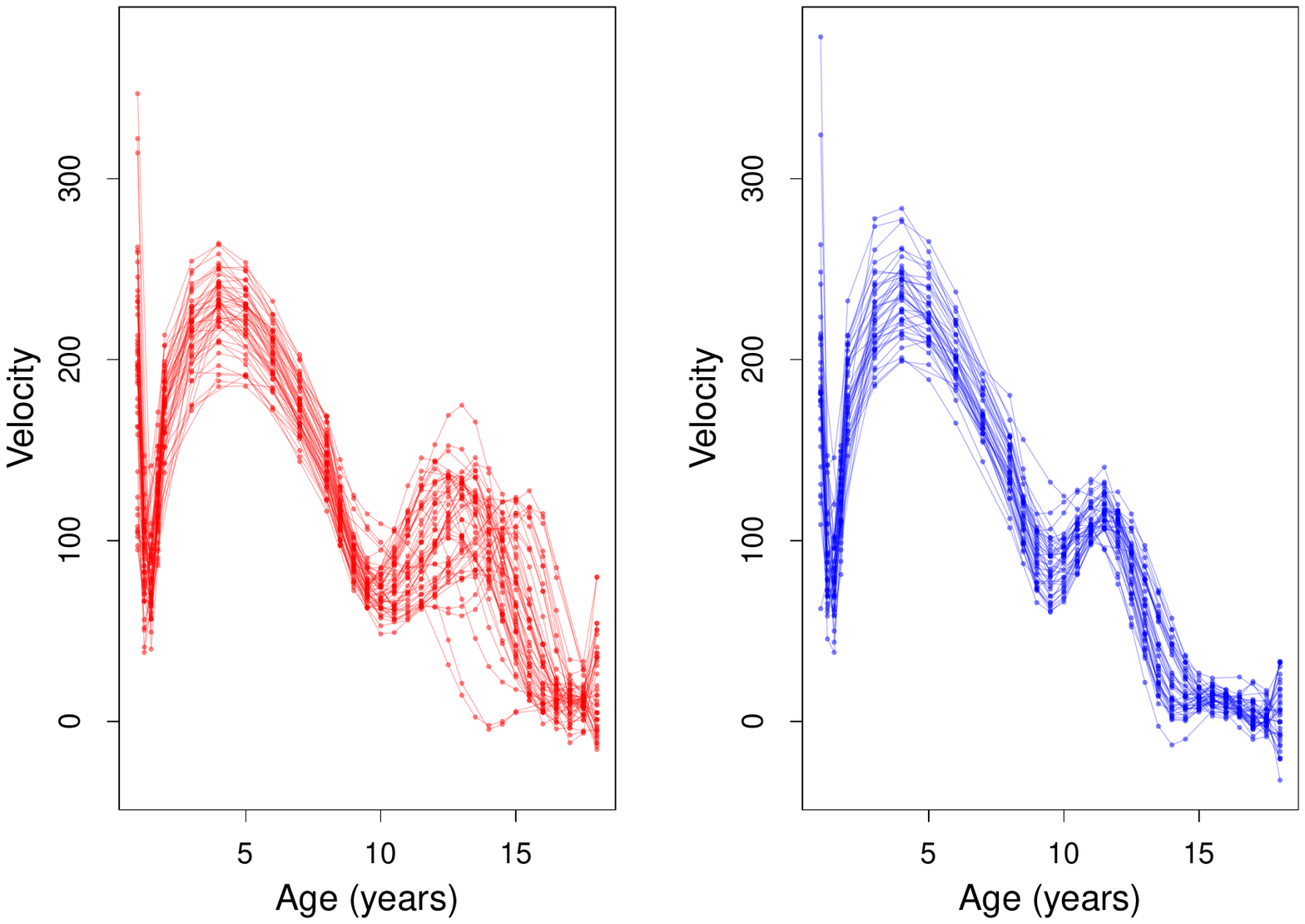}
    \includegraphics[scale=0.45]{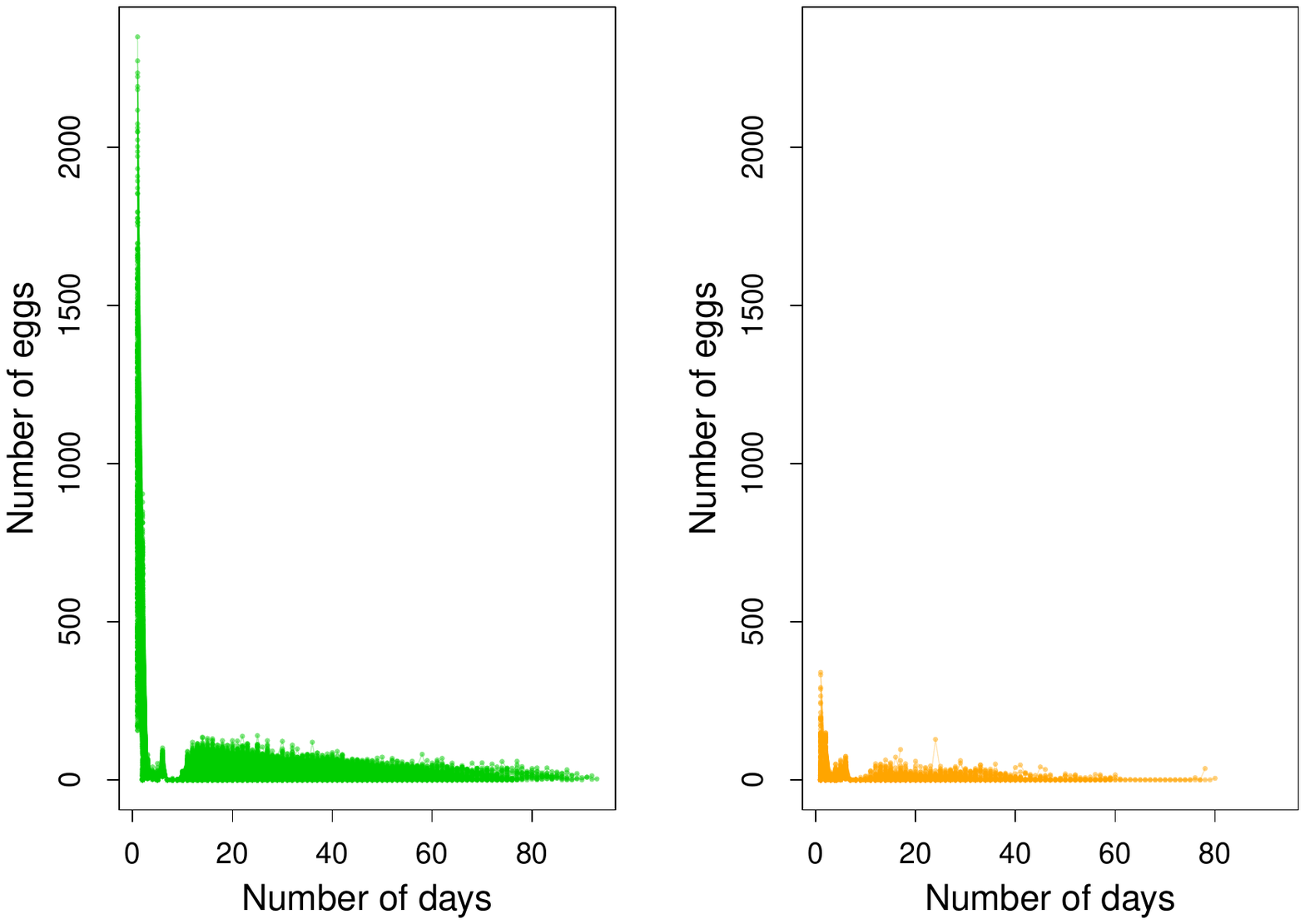}
    \caption{Plots of Berkeley and Medfly datasets showing the different clusters obtained}
    \label{fig:irregular-real-plots}
\end{figure}

We have also investigated the performance of the different clustering methods for fragmented real datasets. For the real datasets considered in Section \ref{real data analysis:reg}, we have introduced fragmentation of the observation grid in the same way as in Section \ref{simulation studies:irreg} earlier. Table \ref{tab:comparison_fragmented_real_data} provides the Rand indices obtained in this setup. It is observed that Algorithm \ref{alg:grpm_irreg} has the best performance for all the datasets except the Coffee and the Wheat data. For these two datasets, all of the methods have poor performance. For the purpose of demonstration, we have provided the plots of two datasets (Meatspectrum and Flours) where the clusters obtained by applying Algorithm \ref{alg:grpm_irreg} are shown using different colours (see Figure \ref{fig:fragmented-real-plots}).

\begin{figure}
    \centering
    \includegraphics[scale=0.35]{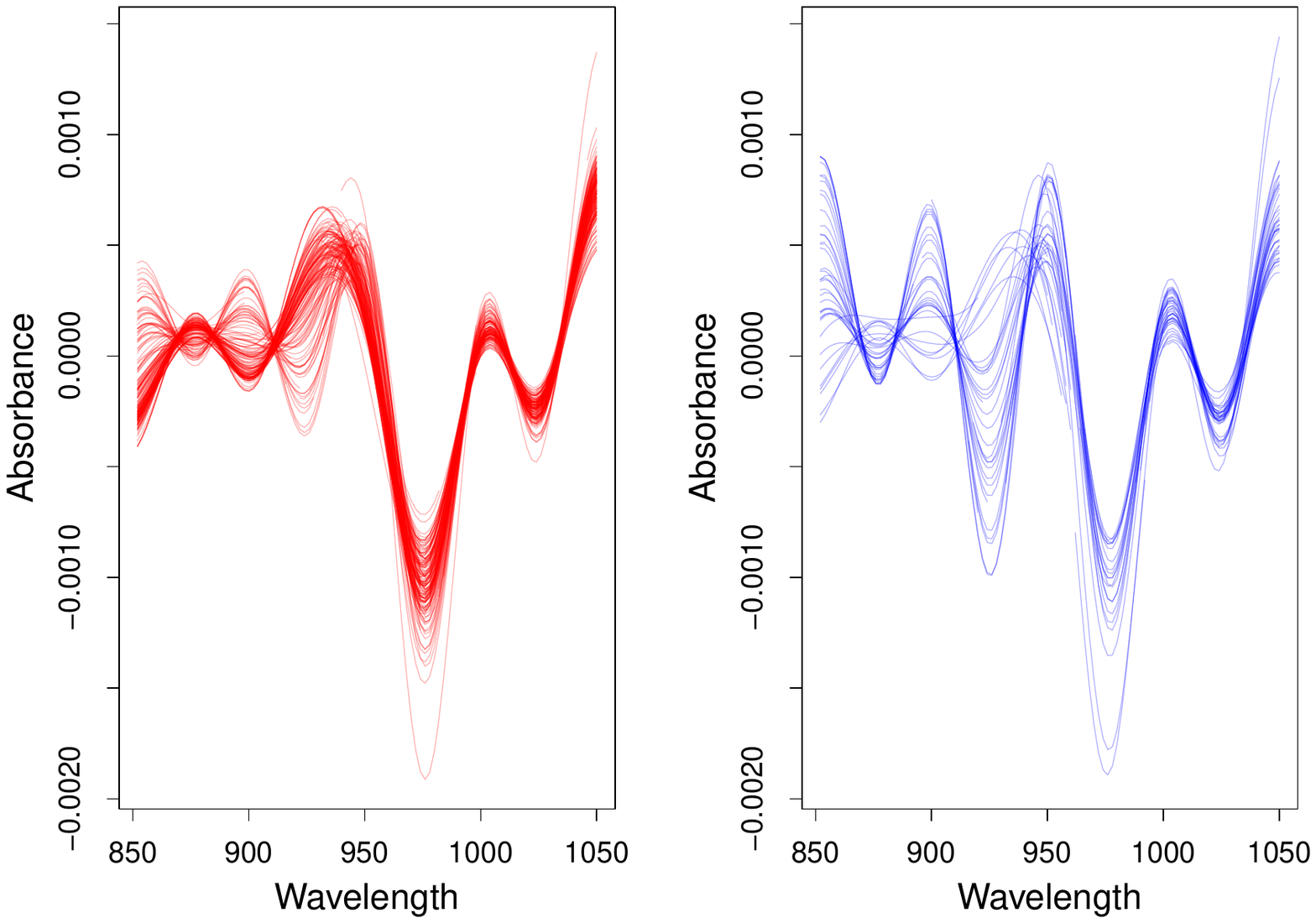}
    \includegraphics[scale=0.45]{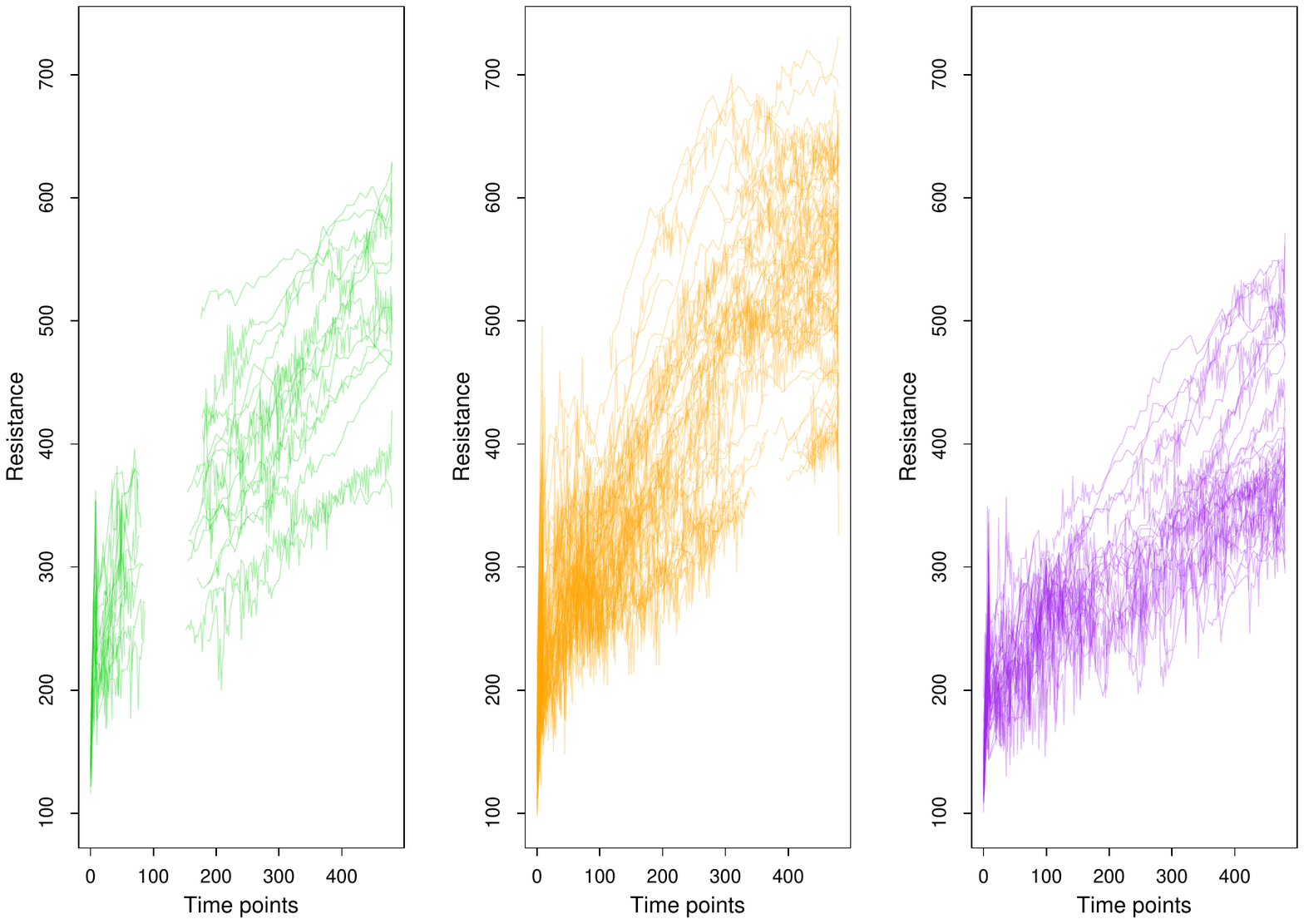}
    \caption{Plots of Meatspectrum and Flours datasets (manually fragmented) showing the different clusters obtained}
    \label{fig:fragmented-real-plots}
\end{figure}

\begin{table}[h!]
\centering
\caption{Comparison of clustering methods across fragmented real data (manually introduced fragmentation).}
\label{tab:comparison_fragmented_real_data}
\resizebox{\textwidth}{!}{
\begin{tabular}{lcccccc}
\toprule
\textbf{Dataset} & \textbf{K} & \textbf{TERPM} & \textbf{PC}& \textbf{sf}& \textbf{MC}  \\
\midrule
Meatspec & 2  &       \textbf{0.903}&    \textit{0.759} &- & 0.538         \\
Coffee& 2& \textbf{0.494}& \textbf{0.494}&- & \textit{0.492}\\
Flours & 3  &       \textbf{0.769}&        \textit{0.685}& 0.677&    -   \\
ECG& 2& \textbf{0.595}& 0.512& \textit{0.552}& 0.516\\
Arrowhead & 3  &       \textbf{0.634}&               \textit{0.585}& -& 0.512\\
Satellite & 2  &       \textbf{0.675}&               \textit{0.653}& 0.633& 0.499\\
Berkeley& 2& \textbf{0.727}& \textit{0.698}& 0.508& 0.516\\
Olive-oil & 4  &       \textbf{0.838}&               \textit{0.588}& -& 0.288\\
Wheat & 2&  0.496& \textbf{0.547}& -& \textit{0.511}\\
BME & 3& \textbf{0.633}& \textit{0.555}& \textit{0.555}& 0.496\\
Synthetic Control & 6  &       \textit{0.842}&               \textbf{0.862}& 0.832& 0.745\\
\bottomrule
\end{tabular}
}
\end{table}

\subsection{Computation time}
We now report the computation time of the proposed Algorithm \ref{alg:grpm} and Algorithm \ref{alg:grpm_irreg}. The goal of this study is to investigate the effect of the number of projection directions, number of populations and the total sample size on the computational costs of the algorithms. All experiments were performed on a computer equipped with a 12th Gen Intel(R) Core(TM) i7-12700 (2.10 GHz) and 16 GB RAM running a 64-bit operating system. The proposed algorithm was implemented in R. Functional observations were generated on a grid of 100 time points over $[0,1]$ (equi-spaced for regular functional data and randomly chosen for irregular functional data). Each reported value in Table \ref{tab:timing} represents the computation time (in seconds) for a single iteration of the clustering algorithm.

\begin{table}[htbp]
\centering
\caption{Computing times (in seconds) of the \ref{alg:grpm} algorithm and those for the \ref{alg:grpm_irreg} algorithm (provided in brackets) for different values of total sample size ($N$), number of clusters ($K$), and random projections ($M$).}
\label{tab:timing}
\resizebox{\textwidth}{!}{
\begin{tabular}{lccccccc}
\toprule
\multirow{2}{*}{$N$} &
\multirow{2}{*}{$K$} &
\multirow{2}{*}{$n$} &
\multicolumn{5}{c}{Number of projections ($M$)} \\
\cmidrule(lr){4-8}
& & & 10 & 50 & 100 & 500 & 1000 \\
\midrule
\multirow{6}{*}{$200$}
  & $2$  & $100$ &  \text{1.39} &  \text{1.84} &   \text{2.50} &   \text{8.09} &    \text{14.95} \\
  & & & \text{(68.33)} & \text{(69.17)} & \text{(70.89)} & \text{(88.86)} & \text{(110.23)} \\
  & $5$  & $40$  &  \text{1.28} &  \text{1.86} &   \text{2.56} &   \text{8.04} &    \text{15.10} \\
    & & & \text{(67.17)} & \text{(69.24)} & \text{(71.17)} & \text{(88.69)} & \text{(110.62)} \\
  & $10$ & $20$  &  \text{1.17} &  \text{1.75} &   \text{2.33} &   \text{7.58} &    \text{14.12} \\
  & & & \text{(66.11)} & \text{(68.31)} & \text{(70.05)} & \text{(88.40)} & \text{(109.85)} \\
\midrule
\multirow{6}{*}{$1000$}
  & $2$  & $500$ &  \text{31.64} &  \text{43.47} &  \text{58.32} & \text{175.61} &  \text{322.39} \\
    & & & \text{(277.75)} & \text{(294.88)} & \text{(316.81)} & \text{(500.86)} & \text{(727.23)} \\
  & $5$  & $200$ &  \text{29.73} &  \text{41.20} & \text{55.78} & \text{171.72} &   \text{315.86} \\
    & & & \text{(276.69)} & \text{(292.14)} & \text{(313.92)} & \text{(494.43)} & \text{(719.21)} \\
  & $10$ & $100$ &  \text{28.26} &  \text{39.21} &  \text{52.74} & \text{162.19} &   \text{299.69} \\
    & & & \text{(274.12)} & \text{(290.55)} & \text{(314.23)} & \text{(487.19)} & \text{(699.11)} \\
\bottomrule
\end{tabular}
}
\end{table}

\section{Conclusion}
\label{conclusion}
This paper proposes a simple and flexible projection-based framework for clustering functional data, where a large collection of random projections is used to effectively reduce the infinite-dimensional clustering problem to a high-dimensional one. A notable advantage of the proposed methodology is its favorable computational complexity, ranging from linear to at most quadratic in the sample size, which makes the algorithms computationally efficient and easy to implement in practice. Another important strength of the proposed framework is its ability to accommodate a wide range of observational settings for functional data, including regular, irregular, and fragmented observations. The proposed algorithms are implemented for a prespecified number of clusters $K$. In practice, one may incorporate an additional optimization layer over different choices of $K$ by running the algorithms across a range of candidate values of $K$ and selecting the value that minimizes the corresponding optimal cost functions in \eqref{min-cost} for regular functional data and \eqref{min-cost-irreg} for irregular and fragmented functional data. Another layer of ensembling, in addition to the choice of Gaussian projection directions and their number, which may be incorporated in the proposed clustering methods would be with respect to the choice of kernels in the definition of the MADD dissimilarity measure (the paper presently uses the Laplace kernel). Although we do not investigate this perspective in the paper, one could use a collection of characteristic kernels on the real line for this purpose. The empirical studies indicate that the performance of the proposed methods deteriorates to some extent when the observation grids become highly sparse or when the functional trajectories contain substantial missingness over the domain. This phenomenon is likely due to reduced accuracy in approximating the Riemann integrals involved in the projection step. Addressing this limitation for sparse and highly fragmented functional data remains an important direction for future research.

\bibliographystyle{plain}
\bibliography{cas-refs}

\end{document}